\theoremstyle{plain}
\newtheorem{theorem}{Theorem}
\theoremstyle{definition}
\newtheorem{assumption}{Assumption}
\theoremstyle{definition}
\theoremstyle{definition}
\newtheorem{property}{Property}
\theoremstyle{definition}
\newtheorem{lemma}{Lemma}
\theoremstyle{remark}
\newtheorem{remark}{Remark}
\crefname{assumption}{Assumption}{Assumptions}
\crefname{equation}{}{}
\crefname{figure}{Fig.}{Figs.}
\crefname{table}{TABLE}{TABLEs}
\crefname{theorem}{Theorem}{Theorems}
\newcommand{\tabincell}[2]{\begin{tabular}{@{}#1@{}}#2\end{tabular}}  
 \title{Adaptive Fractional-Order Sliding Mode Controller with Neural Network Compensator for an Ultrasonic Motor}
 \author{{Xiaolong~Chen, Wenyu~Liang,~\IEEEmembership{Member,~IEEE,} Han~Zhao, and~Abdullah~Al~Mamun,~\IEEEmembership{Senior Member,~IEEE}}
 	\thanks{X. Chen, and H. Zhao are with the School of Mechanical Engineering, Hefei University of Technology, Hefei China 230009.}
 	\thanks{X. Chen, W. Liang, and A. Al Mamun are with the Department of Electrical and Computer Engineering, National University of Singapore, Singapore 117582.}
	 \thanks{X. Chen, and W. Liang are also with the Institute for Infocomm Research, A*STAR Singapore 138632.}
 \thanks{This work has been submitted to the IEEE for possible publication.
 	Copyright may be transferred without notice, after which this version may
 	no longer be accessible.}}
\date{}
\begin{document}
\maketitle

\begin{abstract}
Ultrasonic motors (USMs) are commonly used in aerospace, robotics, and medical devices, where fast and precise motion is needed. Remarkably, sliding mode controller (SMC) is an effective controller to achieve precision motion control of the USMs. To improve the tracking accuracy and lower the chattering in the SMC, the fractional-order calculus is introduced in the design of an adaptive SMC in this paper, namely, adaptive fractional-order SMC (AFOSMC), in which the bound of the uncertainty existing in the USMs is estimated by a designed adaptive law. Additionally, a short memory principle is employed to overcome the difficulty of implementing the fractional-order calculus on a practical system in real-time. Here, the short memory principle may increase the tracking errors because some information is lost during its operation. Thus, a compensator according to the framework of Bellman's optimal control theory is proposed so that the residual errors caused by the short memory principle can be attenuated. Lastly, experiments on a USM are conducted, which comparative results verify the performance of the designed controller.
\end{abstract}

\begin{IEEEkeywords}
Ultrasonic motors, sliding mode controller, fractional-order calculus, short memory principle, Hamilton--Jacobi--Bellman equation
\end{IEEEkeywords}

\section{Introduction}
\IEEEPARstart{U}{ltrasonic} motors (USMs) have excellent performances in fast response, high accuracy, quiet working, compact size \cite{tian2020review}, and thus they have extensively been applied in many industrial applications, such as aerospace, robotics, and medical devices, where fast and precise motion is needed. The USM is a typical nonlinear mechanical system. When such a system is in operation, the model uncertainties, hysteresis nonlinearity, inherent friction, and heat disturbance can all destroy the system's operating accuracy and thus the desired motions is failed to be achieved. Hence, it is crucial to design a precision motion controller with strong robustness for such a system.

Model-free controller \cite{zhang2019integrated,xu2017model,zhang2018model,Yik2019motion,saadatmand2020power} and  model-based controller \cite{liang2018contact,ma2021optimal,lau2019adaptive,chen2020optimal,chen2020novel} are two effective methods to achieve desired motions. The former one is implemented in a system without considering its accurate prior model information of dynamics or environment, which makes such a method be easily extended to carry out in most mechanical systems. In this method, the system stability is guaranteed by the tuning or training of feedback gains or some terms in the controller. However, for a specific complex nonlinear system in an uncertain or unstable environment, the stability is easy to collapse when the stability relies on the tuning method. The training method could make the system have a better adaptive capacity in an unstable environment, but it may need the use of high-performance computers to achieve satisfactory results through a large number of iterations. Inversely, for the model-based controller, the prior model information is considered in controller design, which could provide faster response and better performance than the model-free controller in controlling a specific complex nonlinear system \cite{zhang2020trajectory}. Therefore, the model-based controller design also becomes an active field in dealing with the trajectory tracking problem of complex nonlinear systems.

Sliding mode controller (SMC) with prior model information is one of the most widely used model-based controllers, which has strong robustness, fast convergence, and good trajectory tracking accuracy \cite{ding2020second,chin2018robust}. However, because there exists the presence of unmodeled dynamics, disturbances, and uncertainties in the system, a high-frequency phenomenon called chattering is always arisen in implementing SMC \cite{lee2007chattering}. The chattering effect could reduce the accuracy or damage the actuators, as pointed out in several works \cite{ma2017integrated,ma9374800,ma2021maglev}. Significatnly, the chattering is relevant to the disturbance \cite{song2014adaptive}. A disturbance could lead to significant changes of the position errors and velocity errors in the sliding surface function, especially for the latter one. Then, the signum function of SMC generates an obvious chattering because of the error changes. One popular way to alleviate this phenomenon is the boundary layer method in which a continuous function, such as a sigmoid function or a saturation function, is employed to substitute the signum function \cite{drakunov1992sliding,wang2020iterative}. However, this method can result in reduction of system robustness and increase in the tracking errors\cite{modiri2020adaptive}.

Another effective way to alleviate the chattering phenomenon is to utilize the fractional-order calculus during the design of SMC \cite{moezi2019optimal,fei2019experimental,sun2018practical}. By comparing with the classical integer-order calculus, the fractional-order calculus has the properties of long-range correlation and history dependence, and thus its result contains richer information. In recent years, many researchers have gradually tried to introduce the fractional-order calculus into the controller design so that higher controller flexibility can be achieved. To deal with the fractional-order differential of a mismatched disturbance, a fractional-order disturbance observer is designed in \cite{wang2018fractional}. To enhance the precise tracking response and robust control performance of a linear permanent magnet synchronous motor (PMSM) system, a fractional-order sliding mode controller (FOSMC) is developed in \cite{chen2019precision}. Furthermore, a FOSMC strategy is applied to control the speed of a PMSM and reduce the chattering effect \cite{zaihidee2019application}.  

Because of the fractional-order calculus's advantages, all the proposed methods mentioned above have a less chattering effect, lower overshoot, and faster response speed. However, to maintain the heredity and long-range correlation, the calculation process requires the computer to have a huge memory to store historical data \cite{wei2017note,pano2019fpga,petravs2019fractional}. As a result, there is a challenge of the fractional-order calculus to carry out on a practical system in real-time for a long period. In \cite{podlubny1999fractional}, a short memory principle is proposed to provide a solution for this problem, which only takes into account the behavior in the recent past. Here, although the principle provides a compromise solution for experiments, the ignored information in implementing the principle may render to an increase in the tracking errors. Taking all the above matters into consideration, an adaptive fractional-order sliding mode controller (AFOSMC) and a compensator according to the framework of Bellman’s optimal control theory are proposed in this paper for the motion control of USMs. Significantly, the short memory principle is employed in implementing the designed controller.

The main contributions of this paper are summarized below. Firstly, the nonlinear dynamical model with the time-varying bounded but unknown uncertainty of the USM is built. Secondly, a fraction-order sliding surface function is designed, and then an AFOSMC is proposed to ensure that the USM can run smoothly and accurately. Here, an adaptive law is proposed to adapt the bound of the uncertainty, and the short memory principle is employed to solve the numerical solution of the fractional derivative during the implementation of the AFOSMC. Furthermore, the stability of the AFOSMC is analyzed by Lyapunov theory. Thirdly, a compensator is designed in this paper to attenuate the residual errors caused by the short memory principle. Significantly, this compensator is based on the framework of Bellman's optimal control theory, and a neural network updated by gradient descent algorithm is developed to estimate the optimal cost function. At the same time, the Lyapunov theory verifies that the dynamics of the neural network is ultimately uniformly bounded.

The rest of this paper is structured as follows. At first, the dynamical model of a class of USMs with the uncertainty is introduced in Section II. Next, Section III shows the concepts of the fraction-order calculus and details the design of the proposed controller. Following that in Section IV, the experiments are carried out on a USM in real-time to show the practical performance of the designed controller under both continuous and discontinuous references. Finally, Section V gives the conclusions.

\section{Dynamical model of USM}
The dynamical model with the uncertainty of a USM system can be given by the following equation:
\begin{equation}\label{dynamical_model}
\begin{split}
&m(\varrho(t))\ddot{q}(t)+b(\varrho(t))\dot{q}(t)+c(\varrho(t))q(t)=g(\varrho(t))\mu(t),
\end{split}
\end{equation}
where $t\in \mathbb{R}$ is the time, $q \in \mathbb{R}$ is the generalized coordinate, $\dot{q} \in \mathbb{R}$ and $\ddot{q} \in \mathbb{R}$ are the velocity and acceleration, respectively, $\varrho \in \bm{\sum}\subset \mathbb{R}^p$ is the time-varying uncertain parameter (that is possibly fast). Here, $\bm{\sum}\subset \mathbb{R}^p$ is compact but unknown, which represents the possible bound of $\varrho$. $m \in \mathbb{R}$, $b\in \mathbb{R}$, $c\in \mathbb{R}$ are the mass of the system, damping coefficient, stiffness, respectively, $g\in \mathbb{R}$ is the control gain, $\mu \in  \mathbb{R}$ is the control input to the system.

\begin{assumption}\label{assumption_m}
For each $t\in  \mathbb{R}$, $\varrho \in \bm{\sum}$,  $m(\varrho(t))>0$.
\end{assumption}

Considering the uncertain parameter $\varrho$ in the system \cref{dynamical_model}, the parameters $m$, $b$, $c$, $g$ can be decomposed as
\begin{equation}\label{decomposed_m}
	\begin{split}
		m(\varrho):=\bar{m}+\Delta m (\varrho),
	\end{split}
\end{equation}
\begin{equation}\label{decomposed_b}
	\begin{split}
		b(\varrho):=\bar{b}+\Delta b (\varrho),
	\end{split}
\end{equation}\begin{equation}\label{decomposed_c}
\begin{split}
	c(\varrho):=\bar{c}+\Delta c (\varrho),
\end{split}
\end{equation}\begin{equation}\label{decomposed_g}
\begin{split}
	g(\varrho):=\bar{g}+\Delta g (\varrho),
\end{split}
\end{equation}
where $\bar{m}$, $\bar{b}$, $\bar{c}$ and $\bar{g}$ denote the nominal parts and $\bar{m}$ is positive, while $\Delta m$, $\Delta b$, $\Delta c$ and $\Delta g$ denote the uncertain parts. Let $h:=\bar{m}^{-1}$, $\Delta h(\varrho):=m^{-1} (\varrho)-\bar{m}^{-1}$, and then we have $m^{-1} (\varrho)=h+\Delta h(\varrho)$. 

The model-based controller is designed based on a dynamical model like \cref{dynamical_model}. Even if system identification can obtain a very accurate dynamical model, the uncertainty and disturbance still exist. For example, the hysteresis nonlinearity, inherent friction, and heat disturbance always exist in the operation of USMs, which are (possibly fast) time-varying and difficult to identify. Thus, it is essential to consider the uncertainty in model-based controller design to improve controller robustness and accuracy.

\section{Controller design}
\subsection{Fractional-order sliding mode controller design}
A fundamental operator of the fractional calculus can be denoted by:
\begin{equation}\label{fundamental_fractional_calculus}
_{t_0}\mathscr{D}^\alpha_t\cong \mathscr{D}^\alpha= \left\{ \begin{aligned}
&\frac{\mathscr{D}^\alpha}{dt^\alpha}, &\text{if}\quad  \alpha> 0,\\
&1,  &\text{if} \quad \alpha= 0,\\
&\int_{t_0}^t (d\tau)^{-\alpha}, &\text{if} \quad \alpha< 0,\\
\end{aligned}
\right.
\end{equation}
where $\alpha \in \mathbb{R}$ is the fractional-order, $t_0$ and $t$ are the two limits of the operation.
\begin{remark}
When the fractional-order $\alpha$ is positive, the derivative operator is carried out. In contrast, when the fractional-order $\alpha$ is negative, the integral operator is carried out.
\end{remark}

There are several definitions of the general fractional operator, and for a continuous function $\eta (t)$, two most commonly used operators are Riemann--Liouville (RL) definition and Gr$\ddot{\mathrm{u}}$nwald--Letnikov (GL) definition.

The RL fractional integral operator of order $\alpha$ of function $\eta(t)$ is defined by \cite{luo2012fractional}:
\begin{equation}\label{RL_fractional_integral}
_{t_0}I_t^\alpha \eta(t)=\frac{1}{\Gamma(\alpha)}\int_{t_0}^t \eta(\tau)(t-\tau)^{\alpha-1}d\tau,
\end{equation}
where $\alpha>0$, $\Gamma(\alpha)=\int_0^\infty  \tau^{\alpha-1} \exp(-\tau) d\tau$.

The RL fractional derivative operator of order $\alpha$ of function $\eta(t)$ is defined by 
\begin{equation}\label{RL_fractional_derivative}
^{RL}_{t_0}\mathscr{D}_t^\alpha \eta(t)=\frac{d^{n_f}}{dt^{n_f}} [I^{n_f-\alpha}\eta(t)],
\end{equation}
where $n_f$ is the smallest integer greater than or equal to $\alpha$. 

The GL fractional derivative operator of order $\alpha$ of function $\eta(t)$ is defined by 
\begin{equation}\label{GL_fractional_derivative}
^{GL}_{t_0}\mathscr{D}_t^\alpha \eta(t) = \lim_{\varsigma \rightarrow 0^+} \frac{_{t_0}\tilde{\Delta}_t^\alpha \eta(t) }{\varsigma^\alpha},
\end{equation}
and
\begin{equation}\label{GL_fractional_Delta}
_{t_0}\tilde{\Delta}_t^\alpha \eta(t) =  \sum_{j=0}^{[\frac{t-t_0}{\varsigma}]} \tilde{c}_j \eta(t-j\varsigma),
\end{equation}
where $[\frac{t-t_0}{\varsigma}]$ represents the integer part of $\frac{t-t_0}{\varsigma}$, $ \tilde{c}_j=(1-\frac{1+\alpha}{j})\tilde{c}_{j-1}$ is the binomial coefficient and $\tilde{c}_0=1$, $\varsigma$ is the time-step.

\begin{remark}
Note that the GL fractional derivative operator coincides with the RL fractional derivative operator if the considered function $\eta(t)$ is sufficiently smooth. Therefore, it is allowed to use the RL fractional derivative operator during problem formulation and then turn to the GL fractional derivative operator for obtaining the numerical solution \cite{podlubny1999fractional}. 
\end{remark}

\begin{property}
Considering two continuous functions $\eta_1(t)$ and $\eta_2(t)$ and two arbitrary scalars $\zeta_1$ and $\zeta_2$,  the following property holds for $\forall~\alpha>0$ \cite{luo2012fractional,podlubny1999fractional},
\begin{equation}\label{RL_property_1}
	\mathscr{D}^{\alpha}[\zeta_1 \eta_1(t)+\zeta_2\eta_2(t)]=\zeta_1\mathscr{D}^{\alpha} \eta_1(t)+\zeta_2\mathscr{D}^{\alpha} \eta_2(t).
\end{equation}
\end{property}

\begin{property}
Considering a continuous function $\eta(t)$,  the following property holds for $\forall~\alpha_1>0$, $\alpha_2>0$, \cite{luo2012fractional,podlubny1999fractional}
\begin{equation}\label{RL_property_2_1}
\mathscr{D}^{\alpha_1} [\mathscr{D}^{ - \alpha_2}\eta(t)]=\mathscr{D}^{\alpha_1 - \alpha_2} \eta(t).
\end{equation}
Specially, when $\alpha_1=\alpha_2$, we have
\begin{equation}\label{RL_property_2_2}
\mathscr{D}^{\alpha_1 - \alpha_2} \eta(t)=\mathscr{D}^0  \eta(t) =\eta(t).
\end{equation}
\end{property}

\begin{property}
Considering a continuous function $\eta(t)$,  the following property holds  for $\forall~m\in \mathbb{N}^*$, and $\alpha>0$,  \cite{luo2012fractional,podlubny1999fractional} 
	\begin{equation}\label{RL_property_3_1}
		\frac{d^m}{dt^m} \mathscr{D}_t^\alpha \eta(t) = \mathscr{D}_t^{m+\alpha} \eta(t).
	\end{equation}
Specially, when $m-1\leq\alpha<m$, we have
\begin{equation}\label{RL_property_3_2}
	\frac{d^m}{dt^m} \mathscr{D}_t^{-(m-\alpha)} \eta(t) = \mathscr{D}_t^{\alpha} \eta(t).
\end{equation}
\end{property}

\begin{remark}
The numerical solution of the fractional derivative relies on all historical data, which makes the fractional derivatives possess to be long memory characteristic. The characteristic is the essential differences between the fractional calculus and the integral one. In the real world, the implementation of a fractional derivative is performed by using high-performance computers. Therefore, its solution may take up much hardware resources and increase the burden on computers.
\end{remark}

To ensure computational efficiency and real-time performance, especially in feedback control, the short memory principle is implemented in solving the numerical solution of a fractional derivative.

\begin{lemma}
(Short memory principle \cite{podlubny1999fractional}) For a continuous function $\eta(t)$, the following approximation with fixed memory length $L$  is given by
\begin{equation}\label{Short_memory}
_{t_0}\mathscr{D}^\alpha_t \eta(t) \approx_{t-L}\mathscr{D}_{t}^\alpha \eta(t)=\frac{1}{\varsigma^\alpha}  \sum_{j=0}^{[\frac{L}{\varsigma}]} \tilde{c}_j \eta(t-j\varsigma),
\end{equation}
where $0<\alpha<1$, $t>t_0+L$, $L>0$ is the memory length, $[\frac{L}{\varsigma}]$ represents the integer part of $\frac{L}{\varsigma}$.
\end{lemma}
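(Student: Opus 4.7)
The plan is to start from the Gr\"unwald--Letnikov representation in \cref{GL_fractional_derivative,GL_fractional_Delta}, which by the earlier remark agrees with the Riemann--Liouville derivative whenever $\eta$ is sufficiently smooth. Fixing a small step $\varsigma>0$ and recalling that the GL numerical sum from $j=0$ to $[(t-t_0)/\varsigma]$ represents ${}_{t_0}\mathscr{D}^\alpha_t \eta(t)$ up to discretization error, the first move is to split this sum at the index $j=[L/\varsigma]$:
\begin{equation*}
\frac{1}{\varsigma^\alpha}\sum_{j=0}^{[(t-t_0)/\varsigma]} \tilde{c}_j\,\eta(t-j\varsigma)
\;=\; \underbrace{\frac{1}{\varsigma^\alpha}\sum_{j=0}^{[L/\varsigma]} \tilde{c}_j\,\eta(t-j\varsigma)}_{=\,{}_{t-L}\mathscr{D}_t^\alpha \eta(t)}
\;+\; \underbrace{\frac{1}{\varsigma^\alpha}\sum_{j=[L/\varsigma]+1}^{[(t-t_0)/\varsigma]} \tilde{c}_j\,\eta(t-j\varsigma)}_{=:\,R_L(t,\varsigma)}.
\end{equation*}
The leading block is precisely the truncated operator that appears on the right-hand side of \cref{Short_memory}; the assumption $t>t_0+L$ guarantees that this block uses only values of $\eta$ inside its domain. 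The lemma therefore reduces to showing that the tail $R_L(t,\varsigma)$ is uniformly small in $\varsigma$ once $L$ is chosen large enough.

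The central analytical step is to pin down the decay of the coefficients $\tilde{c}_j$ generated by $\tilde{c}_j=(1-(1+\alpha)/j)\tilde{c}_{j-1}$ with $\tilde{c}_0=1$. For $0<\alpha<1$ the target is
\begin{equation*}
|\tilde{c}_j| \;=\; \frac{1+o(1)}{|\Gamma(-\alpha)|\, j^{1+\alpha}} \qquad (j\to\infty),
\end{equation*}
which I would obtain either through the closed form $\tilde{c}_j=(-1)^j\,\Gamma(\alpha+1)/[\Gamma(j+1)\Gamma(\alpha-j+1)]$ combined with Stirling's formula, or by iterating the recursion as $\log|\tilde{c}_j|=\sum_{k=1}^{j}\log|1-(1+\alpha)/k|$ and applying the harmonic/Euler--Maclaurin expansion. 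Given a uniform bound $\sup_{\tau\in[t_0,t]}|\eta(\tau)|\le M$ on the working interval, the tail then satisfies
\begin{equation*}
|R_L(t,\varsigma)| \;\le\; \frac{M}{\varsigma^\alpha}\sum_{j>[L/\varsigma]}|\tilde{c}_j| \;\le\; \frac{C_\alpha\, M}{\varsigma^\alpha}\,\Bigl(\frac{\varsigma}{L}\Bigr)^\alpha \;=\; \frac{C_\alpha\, M}{L^\alpha},
\end{equation*}
where $C_\alpha$ depends only on $\alpha$. Since this bound is independent of $\varsigma$ and independent of how far $t_0$ lies in the past, letting $\varsigma\to 0^+$ in the split identity yields the claimed approximation \cref{Short_memory}, with a total error of order $L^{-\alpha}$.

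The main obstacle will be the sharp estimate of $|\tilde{c}_j|$: the bare recursion only shows that the sequence is bounded with alternating sign, whereas the argument requires the explicit polynomial decay at rate $j^{-(1+\alpha)}$ so that the tail sum is summable and the bound depends on $L$ alone rather than on $t-t_0$. Once that asymptotic is in hand, the remaining steps---splitting the discrete convolution, identifying the retained block with ${}_{t-L}\mathscr{D}_t^\alpha \eta(t)$, and absorbing $\eta$ into the sup-norm---are routine bookkeeping.
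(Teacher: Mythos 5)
Your argument is essentially correct, but note that the paper itself offers no proof of this lemma: it is quoted from \cite{podlubny1999fractional}, and the only quantitative content the paper retains is the error bound in \cref{Delta_d}. The standard derivation behind that bound (and the one in the cited reference) works entirely in the continuous Riemann--Liouville picture: the neglected history is the integral $\frac{1}{\Gamma(-\alpha)}\int_{t_0}^{t-L}(t-\tau)^{-1-\alpha}\eta(\tau)\,d\tau$, which is bounded by $\frac{M}{|\Gamma(-\alpha)|}\int_{t_0}^{t-L}(t-\tau)^{-1-\alpha}d\tau\leq \frac{M L^{-\alpha}}{\alpha|\Gamma(-\alpha)|}=\frac{ML^{-\alpha}}{|\Gamma(1-\alpha)|}$, with no reference to the coefficients $\tilde{c}_j$ at all. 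You instead work at the discrete Gr\"unwald--Letnikov level, splitting the sum in \cref{GL_fractional_Delta} at $j=[L/\varsigma]$ and controlling the tail through the asymptotics $|\tilde{c}_j|\sim 1/(|\Gamma(-\alpha)|j^{1+\alpha})$; carried through, this reproduces exactly the constant in \cref{Delta_d}, and it has the advantage of directly justifying the numerical formula \cref{Short_memory} as implemented (the continuous argument needs the separate GL--RL equivalence remark to transfer back to the sum). The price is that you must establish a uniform, not merely asymptotic, bound $|\tilde{c}_j|\leq C_\alpha j^{-1-\alpha}$, which does follow from your closed form $\tilde{c}_j=(-1)^j\binom{\alpha}{j}=\Gamma(j-\alpha)/(\Gamma(j+1)\Gamma(-\alpha))$ and Stirling. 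One small factual slip: for $0<\alpha<1$ the coefficients $\tilde{c}_j$ are not alternating; they are all negative for $j\geq 1$ (only $\tilde{c}_0=1$ is positive). This does not affect your estimate since you bound absolute values, but the remark about sign cancellation should be dropped.
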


This principle considers the experience of $\eta(t)$ only in the "recent past", which helps to reduce the burden of the processor in solving numerical solution of a fractional derivative. However, this implementation pays a fine in the form of some inaccuracy. In detail, if $\eta(t) \leq M$ for $t_0 \leq t\leq T$, we have 
\begin{equation}\label{Delta_d}
	\begin{split}
		\Delta d(t)&:=|_{t_0}\mathscr{D}^\alpha_t \eta(t) - _{t-L}\mathscr{D}_{t}^\alpha \eta(t)| \leq \frac{ML^{-\alpha}}{|\Gamma(1-\alpha)|}.
	\end{split}
\end{equation}

We can find that the value of the right-hand side (RHS) of the above inequality is determined by the memory length $L$. Therefore, the memory length can adjust the required accuracy $\varpi$. Then, for $t_0+L \leq t\leq T$, we have
\begin{equation}
	\begin{split}
	&\Delta d(t)\leq \varpi, \ if \ L \geq (\frac{M}{\varpi |\Gamma(1-\alpha)|})^\frac{1}{\alpha },
	\end{split}
\end{equation}
which is also the choice principle of the memory length $L$ based on the system accuracy $\varpi$.

Now, let $e(t):=q(t)-q_r(t)$ (here, $q_r$ is the reference trajectory) and then we introduce a fractional-order sliding surface function $s(t)$ as follows:
\begin{equation}\label{sliding_surface}
s(t)=\lambda e(t)+  \mathscr{D}^{1+\alpha} e(t) ,
\end{equation}
where $\lambda$ and $\alpha$ are positive constants, and $0<\alpha<1$.

\begin{assumption}\label{Control_assumption_Pi}
	(i) There exist  an unknown constant vector $\bm{\beta}\in(0,\infty)^k$, and a known function $\Pi(\cdot):(0,\infty)^k\times\mathbb{R} \times \mathbb{R} \times \mathbb{R}\rightarrow \mathbb{R}_+$, such that for $\forall~(q,\dot{q},t)\in \mathbb{R}^n \times \mathbb{R}^n \times \mathbb{R}$,
	\begin{equation}\label{Control_assumption_Pi1}
		\begin{aligned}
			&\max\limits[| \mathscr{D}^{\alpha}[k_s \mathrm{sgn} (s(t))-h\Delta b(\varrho)\dot{q}(t)-h\Delta c(\varrho)q(t)\\
			&\quad+h\Delta g(\varrho) \mu_{s}(t)-\Delta h(\varrho) b(\varrho)\dot{q}(t)-\Delta h(\varrho)c(\varrho) q(t)\\
			&\quad+\Delta h(\varrho)g(\varrho)\mu_{s}(t) ] |]\leq\Pi(\bm{\beta},q,\dot{q},t),
		\end{aligned}
	\end{equation}
where $k_s>0$ is a constant, $\mathrm{sgn}(\cdot)$ is the signum function.
\noindent
	(ii) For each set of $(\bm{\beta},q,\dot{q},t)$, we can linearly factorize the function $\Pi(\bm{\beta},q,\dot{q},t)$ with respect to $\bm{\beta}$, and therefore a function $\bar{\bm{\Pi}}(\cdot):\mathbb{R} \times \mathbb{R} \times \mathbb{R}\rightarrow \mathbb{R}_+^k$ exists which makes
	\begin{equation}\label{Control_assumption_Pi2}
		\Pi(\bm{\beta},q,\dot{q},t)=\bm{\beta}^T\bar{\bm{\Pi}}(q,\dot{q},t).
	\end{equation}
\end{assumption}
 
Based on the dynamical model \cref{dynamical_model} and the fractional-order sliding surface function \cref{sliding_surface}, we propose the following controller (namely, AFOSMC):
\begin{equation}\label{fractional_smc}
\begin{split}
\mu_{s}(t)=&\bar{g}^{-1}[\bar{b}\dot{q}(t)+\bar{c}q(t) +\bar{m}\ddot{q}_{r}(t)-\lambda \bar{m} \mathscr{D}^{-\alpha} \dot{e}(t)\\
&-k_p \bar{m} \mathscr{D}^{-\alpha} s(t)-k_s\bar{m}  \mathrm{sgn} (s(t))\\
&-\bar{m} \mathscr{D}^{-\alpha}\frac{s\Pi^2(\hat{\bm{\beta}},q,\dot{q},t)}{\epsilon(t)}],
\end{split}
\end{equation}
with $\textcolor{black}{	
			\dot{\epsilon}(t)=-\bar{l} \epsilon (t).
}$
$k_p$, $\bar{l}$ and $\epsilon(t_0)$ are positive constants, $\hat{\bm{\beta}}$ is the estimated value of $\bm{\beta}$, and $\hat{\alpha}_{\bar{x}}(t_0)>0$ (where $\hat{\alpha}_{\bar{x}}$ is the $\bar{x}$-th component of the vector $\hat{\bm{\beta}}$, $\bar{x}=1,2,\dots,k$)

The vector $\hat{\bm{\beta}}$ is updated by the adaptive law shown below
\begin{equation}\label{hat_beta}
\dot{\hat{\bm{\beta}}}(t)=k_1 |s(t)| \bar{\bm{\Pi}}(q,\dot{q},t),
\end{equation}
where $k_1$ is positive constant.

\begin{theorem}
Consider a mechanical system \cref{dynamical_model} that is subjected to the reference trajectory $q_r$, and suppose that \cref{assumption_m,Control_assumption_Pi} are satisfied. Let $\delta(t):=[s(t) \ (\hat{\bm{\beta}}(t)-\bm{\beta})^T\ 0]\in\mathbb{R}^{1+k+1}$. The controller \cref{fractional_smc} renders the performance as shown below:
		\begin{itemize}
		\item[(i)] Uniform stability: there is a $\gamma>0$ for each $\chi>0$, such that $\|\delta(t)\|< \chi$ for $\forall~t\geq t_0$ if $\delta(\cdot)$ is any solution with $\|\delta(t_0)\|<\gamma$.
		\item[(ii)] Convergence to zero: for any given $\delta(\cdot)$,
			\begin{equation}\label{}
			\begin{split}
\lim\limits_{t\rightarrow \infty} s (t)=0.
			\end{split}
		\end{equation}
	\end{itemize}
\end{theorem}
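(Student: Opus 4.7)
The plan is to build a composite Lyapunov function in the augmented state $(s,\tilde{\bm{\beta}},\epsilon)$, with $\tilde{\bm{\beta}}:=\hat{\bm{\beta}}-\bm{\beta}$,
\begin{equation*}
V(t):=\tfrac{1}{2}s^{2}(t)+\tfrac{1}{2k_{1}}\tilde{\bm{\beta}}^{T}(t)\tilde{\bm{\beta}}(t)+\tfrac{1}{4\bar{l}}\epsilon(t).
\end{equation*}
This is positive definite in $(s,\tilde{\bm{\beta}},\epsilon)$ (note that $\epsilon(t)>0$ is preserved along solutions since $\dot{\epsilon}=-\bar{l}\epsilon$). The target is to show $\dot{V}\le -k_{p}s^{2}$; parts (i) and (ii) will then follow by standard Lyapunov--Barbalat reasoning.

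The first and pivotal step is to put $\dot{s}$ into a form where \cref{Control_assumption_Pi} is directly usable. From \cref{sliding_surface} and Property 3, $\dot{s}=\lambda\dot{e}+\mathscr{D}^{\alpha}\ddot{e}$. Writing $m^{-1}=h+\Delta h$ in \cref{dynamical_model} and applying the decompositions \cref{decomposed_m,decomposed_b,decomposed_c,decomposed_g}, I would split $\bar{m}\ddot{q}$ into a nominal block (in $\bar{m},\bar{b},\bar{c},\bar{g}$) plus a residual $\bar{m}\Phi$, where $\Phi$ is arranged so that its entries match the $\Delta$-terms in the bracket of \cref{Control_assumption_Pi}(i). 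Substituting the controller \cref{fractional_smc} cancels the nominal block, leaving
\begin{equation*}
\ddot{e}=-\lambda\mathscr{D}^{-\alpha}\dot{e}-k_{p}\mathscr{D}^{-\alpha}s-k_{s}\mathrm{sgn}(s)-\mathscr{D}^{-\alpha}\tfrac{s\Pi^{2}(\hat{\bm{\beta}},q,\dot{q},t)}{\epsilon}+\Phi.
\end{equation*}
Applying $\mathscr{D}^{\alpha}$ and invoking Property 2 ($\mathscr{D}^{\alpha}\mathscr{D}^{-\alpha}=\mathrm{Id}$), the $\lambda\dot{e}$ contribution cancels the $\lambda\dot{e}$ already present in $\dot{s}$, giving
\begin{equation*}
\dot{s}=-k_{p}s-\tfrac{s\Pi^{2}(\hat{\bm{\beta}},q,\dot{q},t)}{\epsilon}+\mathscr{D}^{\alpha}\!\bigl[\Phi-k_{s}\mathrm{sgn}(s)\bigr].
\end{equation*}

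Differentiating $V$ along trajectories and substituting the adaptive law \cref{hat_beta} with $\dot{\epsilon}=-\bar{l}\epsilon$ yields $\dot{V}=s\dot{s}+|s|\tilde{\bm{\beta}}^{T}\bar{\bm{\Pi}}(q,\dot{q},t)-\tfrac{1}{4}\epsilon$. In the $s\dot{s}$ piece I use \cref{Control_assumption_Pi}(i) to bound $\bigl|s\,\mathscr{D}^{\alpha}[\Phi-k_{s}\mathrm{sgn}(s)]\bigr|\le |s|\,\Pi(\bm{\beta},q,\dot{q},t)$, then the linear parameterization in \cref{Control_assumption_Pi}(ii) to write $\Pi(\bm{\beta})=\Pi(\hat{\bm{\beta}})-\tilde{\bm{\beta}}^{T}\bar{\bm{\Pi}}$, and finally Young's inequality $|s|\Pi(\hat{\bm{\beta}})\le \frac{s^{2}\Pi^{2}(\hat{\bm{\beta}})}{\epsilon}+\frac{\epsilon}{4}$. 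The controller's $-s^{2}\Pi^{2}(\hat{\bm{\beta}})/\epsilon$ absorbs its Young counterpart, $+\epsilon/4$ cancels $-\epsilon/4$, and $-|s|\tilde{\bm{\beta}}^{T}\bar{\bm{\Pi}}$ cancels the adaptive-law contribution, leaving $\dot{V}\le -k_{p}s^{2}\le 0$. Part (i) is then immediate: given $\chi>0$, pick $\gamma>0$ so that the corresponding sub-level set of $V$ lies inside the $\chi$-ball. For part (ii), $\dot{V}\le -k_{p}s^{2}$ with $V\ge 0$ forces $s\in L^{2}[t_{0},\infty)$; boundedness of $s,\hat{\bm{\beta}},\epsilon$ from (i) and of the closed-loop signals implies $\dot{s}\in L^{\infty}$ via the formula above, so $s$ is uniformly continuous and Barbalat's lemma delivers $s(t)\to 0$.

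The main obstacle I expect is the opening derivation of the compact expression for $\dot{s}$: it requires careful repeated use of Properties 1--3 (notably the semigroup identity $\mathscr{D}^{\alpha}\mathscr{D}^{-\alpha}=\mathrm{Id}$ and the interchange of $\mathscr{D}^{\alpha}$ with integer-order differentiation on smooth closed-loop arguments), together with the algebraic bookkeeping needed to put the uncertainty residual $\Phi$ into the exact form targeted by \cref{Control_assumption_Pi}(i) so that the assumed bound is directly applicable. Once $\dot{s}$ is in the form displayed above, the Young's-inequality/cancellation algebra and the Barbalat endgame are essentially mechanical.
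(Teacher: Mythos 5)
Your proposal matches the paper's proof essentially step for step: the same Lyapunov function $V=\tfrac{1}{2}s^{2}+\tfrac{1}{2k_{1}}\tilde{\bm{\beta}}^{T}\tilde{\bm{\beta}}+\tfrac{1}{4\bar{l}}\epsilon$, the same substitution of the controller into $s\dot{s}$ using the fractional semigroup property, the same use of Assumption 2(i)--(ii) with Young's inequality to cancel the $\Pi^{2}/\epsilon$ and adaptive-law terms, arriving at $\dot{V}\le -k_{p}s^{2}$, and the same Barbalat endgame. No substantive differences to report.
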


\begin{proof}
A Lyapunov function is selected as 
	\begin{equation}\label{proof_1}
		V_1(s,\hat{\bm{\beta}}-\bm{\beta},\epsilon,t)=\frac{1}{2}s^2+\frac{1}{2}k_1^{-1}(\hat{\bm{\beta}}-\bm{\beta})^T(\hat{\bm{\beta}}-\bm{\beta})+\frac{ 1 }{4}\bar{l}^{-1}\epsilon.
	\end{equation}
	
 According to \cref{sliding_surface}, we have the time derivative of $V_1$ as
	\begin{equation}\label{proof_2}
			\dot{V}_1=s\dot{s}+k_1^{-1}(\hat{\bm{\beta}}-\bm{\beta})^T\dot{\hat{\bm{\beta}}}+\frac{ 1 }{4}\bar{l}^{-1}\dot{\epsilon}.
	\end{equation}
For the first term on the RHS of \cref{proof_2}, and recalling \cref{controller_mu}, we have
	\begin{equation}\label{proof_3}
		\begin{split}
			s\dot{s}&=s( \lambda\dot{e}+ \mathscr{D}^{\alpha}\ddot{e})\\
			&=s[\lambda\dot{e}+ \mathscr{D}^{\alpha}(\frac{g}{m}\mu_s-\frac{b}{m}\dot{q}-\frac{c}{m}q-\ddot{q}_r)]\\
			&=s[\lambda\dot{e}+ \mathscr{D}^{\alpha}(-h\bar{b}\dot{q}-h\bar{c}q+h\bar{g}\mu_{s}-\ddot{q}_r)\\
			&\quad+\mathscr{D}^{\alpha}(-h\Delta b\dot{q}-h\Delta cq+h\Delta g \mu_s-\Delta h b\dot{q}\\
			&\quad -\Delta hc q+\Delta hg\mu_s )].\\
		\end{split}
	\end{equation}
	
	Substitute \cref{fractional_smc} into \cref{proof_3}, we have
	\begin{align}\label{proof_4}
			s\dot{s}=&-k_p  s^2-\frac{s^2\Pi^2(\hat{\bm{\beta}},q,\dot{q},t)}{\epsilon}+s\mathscr{D}^{\alpha}[k_s \mathrm{sgn} (s)-h\Delta b\dot{q}\nonumber\\
		&-h\Delta cq+h\Delta g \mu_s-\Delta h b\dot{q}-\Delta hc q+\Delta hg\mu_s].
	\end{align}
	
	With \cref{Control_assumption_Pi}(i), we have
	\begin{equation}\label{proof_5}
		\begin{split}
	s\dot{s}&\leq -k_p  s^2-\frac{s^2\Pi^2(\hat{\bm{\beta}},q,\dot{q},t)}{\epsilon}+|s|\Pi(\bm{\beta},q,\dot{q},t).
		\end{split}
	\end{equation}

With \cref{Control_assumption_Pi}(ii), we have
	\begin{equation}\label{proof_6}
		\begin{split}
		    &- \frac{s^2\Pi^2(\hat{\bm{\beta}},q,\dot{q},t)}{\epsilon}+|s| \Pi(\bm{\beta},q,\dot{q},t)\\
			& =[|s| \Pi(\hat{\bm{\beta}},\bm{\chi},\dot{\bm{\chi}},t)- s^2\frac{ \Pi^2(\hat{\bm{\beta}},q,\dot{q},t)}{\varepsilon}]\\
			&\quad+[|s|\Pi(\bm{\beta},q,\dot{q},t)-|s| \Pi(\hat{\bm{\beta}},q,\dot{q},t)]\\
			&\leq \frac{\varepsilon}{4}+|s|(\bm{\beta}-\hat{\bm{\beta}})^T \bar{\bm{\Pi}}(q,\dot{q},t).\\
		\end{split} 
	\end{equation}

Substitute \cref{proof_5,proof_6} into \cref{proof_2}, we have
	\begin{equation}\label{proof_7}
	\begin{split}
		\dot{V}_1&\leq \frac{\varepsilon}{4}+|s|(\bm{\beta}-\hat{\bm{\beta}})^T \bar{\bm{\Pi}}(q,\dot{q},t)+\frac{ 1 }{4}\bar{l}^{-1}\dot{\epsilon}\\
		&\quad +k_1^{-1}(\hat{\bm{\beta}}-\bm{\beta})^T\dot{\hat{\bm{\beta}}}\\
	& =\frac{\varepsilon}{4}+|s|(\bm{\beta}-\hat{\bm{\beta}})^T \bar{\bm{\Pi}}(q,\dot{q},t)+\frac{ 1 }{4}\bar{l}^{-1}\dot{\epsilon}\\
	&\quad +k_1^{-1}(\hat{\bm{\beta}}-\bm{\beta})^T\dot{\hat{\bm{\beta}}}.
	\end{split}
\end{equation}

Based on $\textcolor{black}{\dot{\epsilon}(t)}$ and substitute \cref{hat_beta} into \cref{proof_2}, we get
	\begin{align}\label{proof_8}
		\dot{V}_1&\leq -k_p  s^2+\frac{\varepsilon}{4}-|s|(\hat{\bm{\beta}}-\bm{\beta})^T\bar{\bm{\Pi}}(q,\dot{q},t)+\frac{ 1 }{4}\bar{l}^{-1}(-\bar{l} \epsilon)\nonumber\\
		&\quad+k_1^{-1}(\hat{\bm{\beta}}-\bm{\beta})^T(k_1 |s| \bar{\bm{\Pi}}(q,\dot{q},t)) =-k_p  s^2.
\end{align}

Because the Lyapunov derivative is non-positive according to \cref{proof_8}, we have the uniform stability. Moreover, from \cref{proof_8}, we have $k_p  s^2 \leq -\dot{V}_1$, and then we have 
\begin{equation}\label{proof_9}
	\begin{split}
		&\lim\limits_{t\rightarrow \infty}\int_0^t k_p  s^2(\Gamma) d \Gamma \\
		& \leq 	V_1(s,\hat{\bm{\beta}}-\bm{\beta},\epsilon,0)- V(s,\hat{\bm{\beta}}-\bm{\beta},\epsilon,\infty)\\
		&\leq V_1(s,\hat{\bm{\beta}}-\bm{\beta},\epsilon,0).
	\end{split}
\end{equation}

Therefore, according to Barbalat's lemma \cite{hou2010new}, we can have the following conclusion: $\lim\limits_{t\rightarrow \infty} 
s^2 =0$. Then, we can conclude that the convergence of $s$ to $0$ as $t\rightarrow \infty$.	
\end{proof}
\subsection{Compensator design}
In this subsection, a compensator is designed to deal with the residual errors that are caused by the short-memory principle. Firstly, there is a vector $\bm{I}(t)=[\lambda_1 \int e(t)dt\ \lambda_2 e(t) \  \lambda_3 \dot{e}(t)]^T$, where $\lambda_1$, $\lambda_2$ and $\lambda_3$ are positive constants. Then, we have its time derivative:
\begin{equation}\label{compensator_model}
\begin{split}
\dot{\bm{I}}(t)=\bm{f}_c(t)+\bm{g}_c(t)\mu_c(t) ,
\end{split}
\end{equation}
where \\ $\bm{f}_c(t)=\begin{bmatrix}
\lambda_1 e &
\lambda _2\dot{e} &
\lambda _3 (-\frac{b}{m}\dot{q}-\frac{c}{m}+\bar{g}\Delta h\mu_c+\Delta g h \mu_c)
\end{bmatrix}^T$, $\bm{g}_c(t)=\begin{bmatrix}0& 0& \lambda_3 \bar{g}h\end{bmatrix}^T$, $\mu_c$ is a compensation controller.

To design the feedback-based approximate optimal controller $u_c$, we firstly define a cost function:
\begin{equation}
\begin{split}
J(\bm{I})=\int_{t_s}^t \bm{I}^T(\tau)\bm{Q}\bm{I}(\tau) + \mu_{c}^T(\tau) R \mu_{c}(\tau) d\tau,
\end{split}
\end{equation}
where $\bm{Q}$ is symmetric positive-definite matrices with appropriate dimension, $R$ is a positive constant.

Then the optimal controller $u^*_{c}$ should satisfy the following optimal cost function:
\begin{equation}
	\begin{split}
		J^*(\bm{I})=&\min_{\mu_c(\tau)|\tau\in R_{\geq t_s}}\int_{t_s}^\infty \bm{I}^T(\tau)\bm{Q}\bm{I}(\tau)+  \mu_{c}^T(\tau) R  \mu_{c}(\tau) d\tau.
	\end{split}
\end{equation}

For \cref{compensator_model}, we have the following Hamiltonian function:
\begin{equation}\label{Hamiltonian_function}
	\begin{split}
		H(\bm{I},\dot{\bm{I}},\mu_c,\nabla J)&=\bm{I}^T(t)\bm{Q}\bm{I}(t) +  \mu_{c}^T(t) R  \mu_{c}(t)\\
		&\quad+\nabla J^{T}(\bm{I}) \dot{\bm{I}}(t),
	\end{split}
\end{equation}
where $\nabla J=\frac{\partial J}{\partial \bm{I}}$.

Based on the Bellman's optimal control theory, the optimal value function $J^*(\bm{I})$ satisfies the Hamilton-Jacobi-Bellman equation:
\begin{equation}
	\begin{split}
		\min_{\mu_c(\tau)|\tau\in R_{\geq t_s}} H(\bm{I},\dot{\bm{I}}, \mu_{c}^*,\nabla J^*)=0.
	\end{split}
\end{equation}

By solving the equation $\frac{\partial H(\bm{I},\dot{\bm{I}}, \mu_{c}^*,\nabla J^*)}{\partial  \mu_{c}^*(t)}=0$, we have the optimal compensator, which can be expressed as 
\begin{equation}
	\begin{split}
		\mu_{c}^*=-\frac{1}{2}\bm{R}^{-1}\bm{g}_c^T\nabla J^*.
	\end{split}
\end{equation}

With the help of a neural network, the optimal cost function can also be expressed as
\begin{equation}\label{optimal_value_function}
	\begin{split}
		J^*(\bm{I})=\bm{W}^{*T} \bm{\sigma}(\bm{I}) +\bm{\varepsilon}_c(\bm{I}),
	\end{split}
\end{equation}
where $\bm{W}^*\in \mathbb{R}^{n_c}$ is the bounded optimal wight value of the neural network, $\bm{\sigma}\in \mathbb{R}^{n_c}$ is the bounded activation function, and $\bm{\varepsilon}_c\in \mathbb{R}^{n_c}$ is the bounded reconstruction error.

Substitute \cref{optimal_value_function} into \cref{Hamiltonian_function}, we have
\begin{equation}\label{neural_Hamiltonian_function}
	\begin{split}
		H(\bm{I},\dot{\bm{I}},\mu_c,\nabla J^*)&=\bm{I}^T(t)\bm{Q}\bm{I}(t) +  \mu_{c}^T(t) R  \mu_{c}(t)\\
		&\quad+ \bm{W}^{*T} \nabla\bm{\sigma}\dot{\bm{I}}(t) +\nabla \bm{\varepsilon}_c \dot{\bm{I}}(t),
	\end{split}
\end{equation}
where $\nabla \bm{\sigma}=\frac{\partial \bm{\sigma}}{\partial \bm{I}}$, and $\nabla \bm{\varepsilon}_c=\frac{\partial \bm{\varepsilon}_c}{\partial \bm{I}}$.

To get $J^*(\bm{I})$ by function \cref{optimal_value_function} is always Utopian, but we can have its estimated value by
\begin{equation}
\begin{split}
\hat{J}(\bm{I})=\hat{\bm{W}}^T \bm{\sigma}(\bm{I}),
\end{split}
\end{equation}
where $\hat{\bm{W}}\in \mathbb{R}^{n_c}$ is the estimate of the optimal wight $\bm{W}^*$, and the bounded error between $\hat{\bm{W}}$ and $\bm{W}^*$ is noted as $\tilde{\bm{W}}=\hat{\bm{W}}-\bm{W}^*$.

Approximating \cref{neural_Hamiltonian_function} yields
\begin{equation}\label{estimated_Hamiltonian_function}
	\begin{split}
		\hat{H}(\bm{I},\dot{\bm{I}}, \mu_{c},\nabla \hat{J})&=\bm{I}^T(t)\bm{Q}\bm{I}(t) +  \mu_{c}^T(t) R  \mu_{c}(t)\\
		&\quad+\nabla \hat{J} ^T(\bm{I}) \dot{\bm{I}}(t),\\
	\end{split}
\end{equation}
where $\nabla \hat{J}=\frac{\partial \hat{J}}{\partial \bm{I}}=\nabla \bm{\sigma}^T\hat{\bm{W}}$.

Now, we have the approximate value of $ \mu_{c}^*$ as
\begin{equation}\label{approximate_optimal_compensator}
 \mu_{c}=-\frac{1}{2}\bm{R}^{-1}\bm{g}_c^T(t)\nabla \hat{J}=-\frac{1}{2}\bm{R}^{-1}\bm{g}_c^T(t)\nabla \bm{\sigma}^T \hat{\bm{W}}.
\end{equation}

We have the objective function $E=\frac{1}{2} \kappa \hat{H}^2$. By using the gradient descent algorithm $-\frac{\partial E}{\partial \hat{\bm{W}}}$, the $\hat{\bm{W}}$ is updated based on the following adaptive law
\begin{equation}\label{adaptive_law}
\begin{split}
\dot{\hat{\bm{W}}}=-   \kappa\hat{H} \nabla \bm{\sigma}\dot{\bm{I}},
\end{split}
\end{equation}
where $\kappa$ is positive constants.

\begin{theorem}
Consider a mechanical system \cref{dynamical_model} subjected to the reference trajectory $q_r$, then the  adaptive law \cref{adaptive_law} can render the dynamics of $\tilde{\bm{W}}$ to be ultimately uniformly bounded.
\end{theorem}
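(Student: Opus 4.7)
The plan is to use a quadratic Lyapunov candidate in the weight-estimation error,
\begin{equation*}
V_2(\tilde{\bm{W}})=\tfrac{1}{2\kappa}\,\tilde{\bm{W}}^T\tilde{\bm{W}},
\end{equation*}
and show that its derivative along the adaptive law \cref{adaptive_law} is negative outside a bounded residual set, which is precisely ultimate uniform boundedness. Since $\bm{W}^*$ is constant, $\dot{\tilde{\bm{W}}}=\dot{\hat{\bm{W}}}$, and substituting \cref{adaptive_law} yields
\begin{equation*}
\dot{V}_2=-\hat{H}\,\tilde{\bm{W}}^T\bm{\phi},\qquad \bm{\phi}:=\nabla\bm{\sigma}\,\dot{\bm{I}}.
\end{equation*}

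The next step is to rewrite $\hat{H}$ in terms of $\tilde{\bm{W}}$ by subtracting the HJB equation $H(\bm{I},\dot{\bm{I}},\mu_c^*,\nabla J^*)=0$ from \cref{estimated_Hamiltonian_function}. Using $\nabla J^*=\nabla\bm{\sigma}^T\bm{W}^*+\nabla\bm{\varepsilon}_c$ from \cref{optimal_value_function} and the identity $\mu_c-\mu_c^*=-\tfrac{1}{2}R^{-1}\bm{g}_c^T(\nabla\bm{\sigma}^T\tilde{\bm{W}}-\nabla\bm{\varepsilon}_c)$, which follows from \cref{approximate_optimal_compensator}, a direct bookkeeping exercise gives
\begin{equation*}
\hat{H}=\tilde{\bm{W}}^T\bm{\phi}+\varepsilon_H,
\end{equation*}
where $\varepsilon_H$ collects the reconstruction-error contributions in $\nabla\bm{\varepsilon}_c$ together with the quadratic cross-term $(\mu_c-\mu_c^*)^TR(\mu_c+\mu_c^*)$. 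Because Theorem~1 already delivers bounded $s$, and hence bounded $\bm{I}$ and $\dot{\bm{I}}$, while $\bm{\sigma}$, $\nabla\bm{\sigma}$ and $\bm{\varepsilon}_c$ are bounded by construction, $\varepsilon_H$ is uniformly bounded, say by $\bar{\varepsilon}_H$, on any compact set containing $\tilde{\bm{W}}$. Applying Young's inequality to the resulting cross term then gives
\begin{equation*}
\dot{V}_2\leq-\tfrac{1}{2}(\tilde{\bm{W}}^T\bm{\phi})^2+\tfrac{1}{2}\bar{\varepsilon}_H^{\,2}.
\end{equation*}
Invoking a standard persistence-of-excitation (PE) condition on $\bm{\phi}$, which in integrated form implies $(\tilde{\bm{W}}^T\bm{\phi})^2\geq\underline{\lambda}\|\tilde{\bm{W}}\|^2$ for some $\underline{\lambda}>0$, gives $\dot{V}_2<0$ whenever $\|\tilde{\bm{W}}\|>\bar{\varepsilon}_H/\sqrt{\underline{\lambda}}$, which is exactly the UUB conclusion.

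The principal obstacle is twofold. First, because $\mu_c$ itself depends on $\hat{\bm{W}}$ through \cref{approximate_optimal_compensator}, the quadratic-in-$\mu_c$ piece of $\hat{H}$ contributes a $\|\tilde{\bm{W}}\|^2$ term to $\varepsilon_H$ that can fight the dominant negative term unless one restricts attention to a compact forward-invariant ball and argues that the PE gain $\underline{\lambda}$ dominates its coefficient. Second, the PE assumption itself is delicate in this setting, since the underlying AFOSMC drives $\bm{I}$ and $\dot{\bm{I}}$ toward zero and thereby tends to starve $\bm{\phi}=\nabla\bm{\sigma}\,\dot{\bm{I}}$ of excitation; justifying PE in practice requires either the residual disturbance from the short-memory truncation $\Delta d(t)$ in \cref{Delta_d} or an added probing signal to keep $\bm{\phi}$ sufficiently rich during the compensation phase.
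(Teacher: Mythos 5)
Your proposal follows essentially the same route as the paper: the same Lyapunov candidate $V_2=\tfrac{1}{2}\kappa^{-1}\tilde{\bm{W}}^T\tilde{\bm{W}}$, the same decomposition $\hat{H}=\tilde{\bm{W}}^T\nabla\bm{\sigma}\dot{\bm{I}}+(\text{bounded residual})$ (the paper writes the residual as $\nu_x:=H-\nabla\bm{\varepsilon}_c\dot{\bm{I}}$, with $H$ evaluated at the implemented $\mu_c$, which spares it the $(\mu_c-\mu_c^*)^TR(\mu_c+\mu_c^*)$ bookkeeping you carry), and the same Young's inequality yielding $\dot{V}_2\leq-\tfrac{1}{2}\|\tilde{\bm{W}}^T\nabla\bm{\sigma}\dot{\bm{I}}\|^2+\tfrac{1}{2}\nu_x^2$. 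Where you genuinely diverge is the final step, and your version is the more defensible one: the paper assumes only an \emph{upper} bound $\|\nabla\bm{\sigma}\dot{\bm{I}}\|\leq\theta_{max}$ and concludes $\dot{V}_2\leq 0$ outside $\{\|\tilde{\bm{W}}\|\leq\|\nu_x/\theta_{max}\|\}$, but an upper bound on the regressor only upper-bounds $|\tilde{\bm{W}}^T\nabla\bm{\sigma}\dot{\bm{I}}|$; to make the negative term dominate for large $\|\tilde{\bm{W}}\|$ one needs precisely the lower bound $(\tilde{\bm{W}}^T\bm{\phi})^2\geq\underline{\lambda}\|\tilde{\bm{W}}\|^2$ that you supply via persistence of excitation. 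Your two caveats --- that $\varepsilon_H$ inherits a $\|\tilde{\bm{W}}\|^2$ contribution through $\mu_c$, and that the closed loop tends to starve $\bm{\phi}$ of excitation --- are real issues that the paper passes over silently (its $\nu_x$ also depends on $\tilde{\bm{W}}$ through $\mu_c$ yet is treated as an exogenous bound). In short, you reproduce the paper's argument and additionally make explicit the excitation hypothesis without which its last inequality does not follow.
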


\begin{proof}
 A Lyapunov function is selected as 
\begin{equation}\label{proof_21}
	\begin{split}
		V_2(\tilde{\bm{W}},t)&=\frac{1}{2}\kappa^{-1}\tilde{\bm{W}}^T\tilde{\bm{W}}.\\
	\end{split}
\end{equation}

Then, we have the time derivative of $V_2$ as
\begin{equation}\label{proof_22}
	\begin{split}
		\dot{V}_2&=\kappa^{-1}\tilde{\bm{W}}^T\dot{\hat{\bm{W}}}.\\
	\end{split}
\end{equation}

Then, by \cref{adaptive_law}, we have
\begin{equation}\label{proof_23}
	\begin{split}
		\dot{V}_2&=-  \tilde{\bm{W}}^T \nabla \bm{\sigma}\dot{\bm{I}}\hat{H}.\\
	\end{split}
\end{equation}

Let $\nu_x=H-\nabla \bm{\varepsilon}_c \dot{\bm{I}}(t)$, and then we have

\begin{equation}\label{proof_24}
	\begin{split}
		\dot{V}_2&=-  \tilde{\bm{W}}^T\nabla \bm{\sigma}\dot{\bm{I}}(\nu_x+\tilde{\bm{W}}^T\nabla \bm{\sigma}\dot{\bm{I}}) \\
		&\leq - \frac{1}{2}\|\tilde{\bm{W}}^T\nabla \bm{\sigma}\dot{\bm{I}}\|^2+\frac{1}{2}\nu_x ^2. \\
	\end{split}
\end{equation}

By assuming that a positive constant  $\theta_{max}$ always exists such that $\|\nabla \bm{\sigma}\dot{\bm{I}}\|\leq \theta_{max}$. Then, we have $\dot{V}_2 \leq 0$ as long as $\tilde{\bm{W}}$ lies outside the compact set $\Omega=\{\tilde{\bm{W}}:\|\tilde{\bm{W}}\|\leq \|\frac{\nu_x}{\theta_{max}}\|\}$. Thus, the dynamics of $\tilde{\bm{W}}$ can be ensured to be ultimately uniformly bounded with above condition. 
\end{proof}

Finally, for the dynamical model \cref{dynamical_model}, we have
\begin{equation}\label{controller_mu}
\mu(t)= \mu_s(t)+\mu_c(t).                   
 \end{equation}

Also, by choosing the Lyapunov function $V=V_1+V_2$, we can conclude that the AFOSMC and compensator can render the system to be ultimately uniformly bounded.

\section{Experiments and Results}
\cref{experimental_platform} shows the experimental platform of a USM, which consists of four parts: a computer, a dSPACE control card (DS1104), the USM, and a motor drive. The dSPACE control card is installed in the computer to carry out the controller with a sampling time of 1 $\mathrm{ms}$. The model number of the USM is PI-M663 Compact Linear Positioning Stage manufactured by Physik Instrumente. Therein, a linear encoder for position measurement is embedded in this motor with a resolution of 0.1 $\mathrm{\mu m}$. The model number of the motor drive is C-185.161 PI Line Motion Controller. Moreover, the weight of the parts mounted on the motor can be considered as a kind of system uncertainty.

\begin{figure}[H]
	\centering
	\includegraphics[scale=0.4,trim=200 80 200 80,clip]{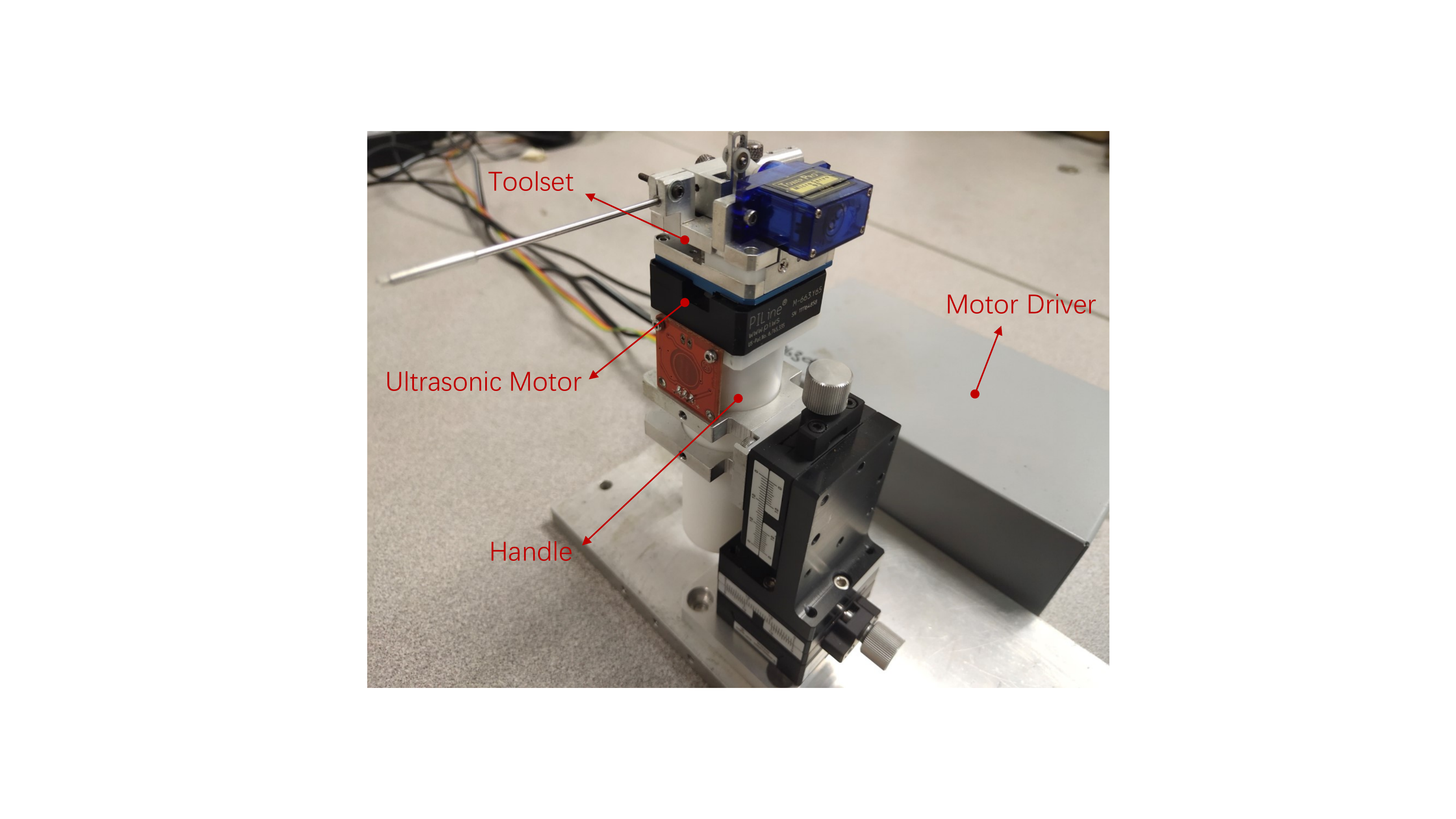}
	\caption{Experimental platform of a USM.\label{experimental_platform}}
\end{figure}
\subsection{Results and discussions}
Through system identification, we obtain the nominal system parameters as $\bar{m}=1$, $\bar{b}=248.4$, $\bar{c}=202$, $\bar{g}=4940$. To forcefully make an evaluation on the performance of the proposed controller, we employ a proportional-integral-derivative (PID) controller and an SMC. A disturbance observer (DOB) $\hat{d}(t)$ in \cite{lau2019adaptive} is added to further improve the property of the two controllers for comparison, and its bandwidth is chosen as $500$. Therefore, we have following three case in implementing the experiments:

Case 1: our designed controller

We can choose $\Pi(\bm{\beta},q,\dot{q},t)$ as (note that the motion requirements \cref{dynamical_model} are with respect to positions)
\begin{equation}
	\begin{split}
		&\Pi(\bm{\beta},q,\dot{q},t)=\beta_1| q|^2+ \beta_2 | q|+\beta_3\\
		& = \left[
		\begin{array}{ccc}
			\beta_1 & \beta_2 & \beta_3 
		\end{array}
		\right]
		\left[\begin{array}{c}
			 | q|^2\\
			 | q|\\
			1
		\end{array}\right],\\
	\end{split}
\end{equation}
where $\beta_1>0,\beta_2>0,\beta_3>0$ are constant parameters but unknown. It is worth noting that the following alternatives can also satisfy \cref{Control_assumption_Pi}:
\begin{equation}\label{equation pi for robot}
	\begin{split}
		&\beta_1| q|^2+ \beta_2 | q|+\beta_3  \leq \beta( | q|+1)^2 =:\beta\tilde{\Pi}(q,\dot{q},t),
	\end{split}
\end{equation}
where $\beta=\max\{\beta_1,\frac{\beta_2}{2},\beta_3\}$, $\tilde{\Pi}=( | q|+1)^2$. Also, the $\beta$ is an unknown constant that is estimated by the adaptive law \cref{hat_beta}.

Here, we choose the controller parameters as $\lambda=1$, $k_p=6000$, $k_s=5000$, $\bar{l}=0.01$, $k_1=0.1$, $\bm{Q}=\mathrm{I}_{3\times3}$, $R=494$, $\kappa=10^{-5}$, $\lambda_1=10$, $\lambda_2=1$, $\lambda_3=0.1$ $\bm{\sigma}(I)=[\frac{1}{1+ \exp(-I(1))}\ \frac{1}{1+ \exp(-I(2))}\  \frac{1}{1+ \exp(-I(3))}]^T$, the memory length $L=50$, and the initial values of $\hat{\alpha}$ and $\epsilon$ are chosen as $\hat{\alpha}(0)=0.1$ and $\epsilon(0)=1$, respectively.

Case 2: PID controller with DOB
\begin{equation}
		\mu_{pid}(t)=-k_{x1} e(t)-k_{x2} \int edt -k_{x3}\dot{e}(t)-\hat{d}(t),
\end{equation}
where the control parameters are choose as $k_{x1}=37$, $k_{x2}=0.1$, $k_{x3}=160$.

Case 3: SMC with DOB
\begin{align}
\mu_{smc}(t)&= \frac{\bar{m}}{\bar{g}}[\frac{\bar{b}}{\bar{m}}\dot{q}+\frac{\bar{c}}{\bar{m}}q+\ddot{q}_r- \lambda_{y} \dot{e}-k_{y1}s_{y}\nonumber\\
&\quad-k_{y2} \mathrm{sat}(\frac{s_{y}}{\sigma_{y}})-\frac{1}{\bar{m}}\hat{d}(t)],
\end{align}
where $s_{y}=\dot{e}+\lambda_{y}e$, $\mathrm{sat}(\cdot)$ is the saturation function, the control parameters are choose as $\lambda_{y} =50$, $k_{y1} =500$, $k_{y2}=300$, $\sigma_{y}=0.1$.

\subsubsection{Experimental results of sinusoidal tracking}
Sinusoidal waves are a typical continuous waveform in testing the tracking performances of control methods. We set the reference trajectory as $q_r(t)=\sin(2 \pi f t)$, where $f$ is the frequency. There are three frequencies of $f=1, 5,10$ Hz in this experiment to check the tracking performance of controllers from low to high frequency. 

\cref{x_sin_1Hz} shows the trajectory tracking results and errors at 1 Hz sinusoidal wave using different controllers. \cref{out_sin_1Hz} shows the control inputs at 1 Hz sinusoidal wave using different controllers. At 1 Hz, the nonlinear inherent friction can be one significant interference. As can be seen, the tracking errors of the PID controller with DOB are the worst among the three controllers because its ability to deal with nonlinear systems is weak. The peak value of the control input of SMC with DOB is obviously higher than that of the designed controller, but the tracking errors are worse than that of the designed controller. This result shows that the designed controller can achieve good system performance with less control effort/energy in the case of low frequency. 

\begin{figure}[!t]
	\centering
	\includegraphics[scale=0.18,trim=40 10 50 0,clip]{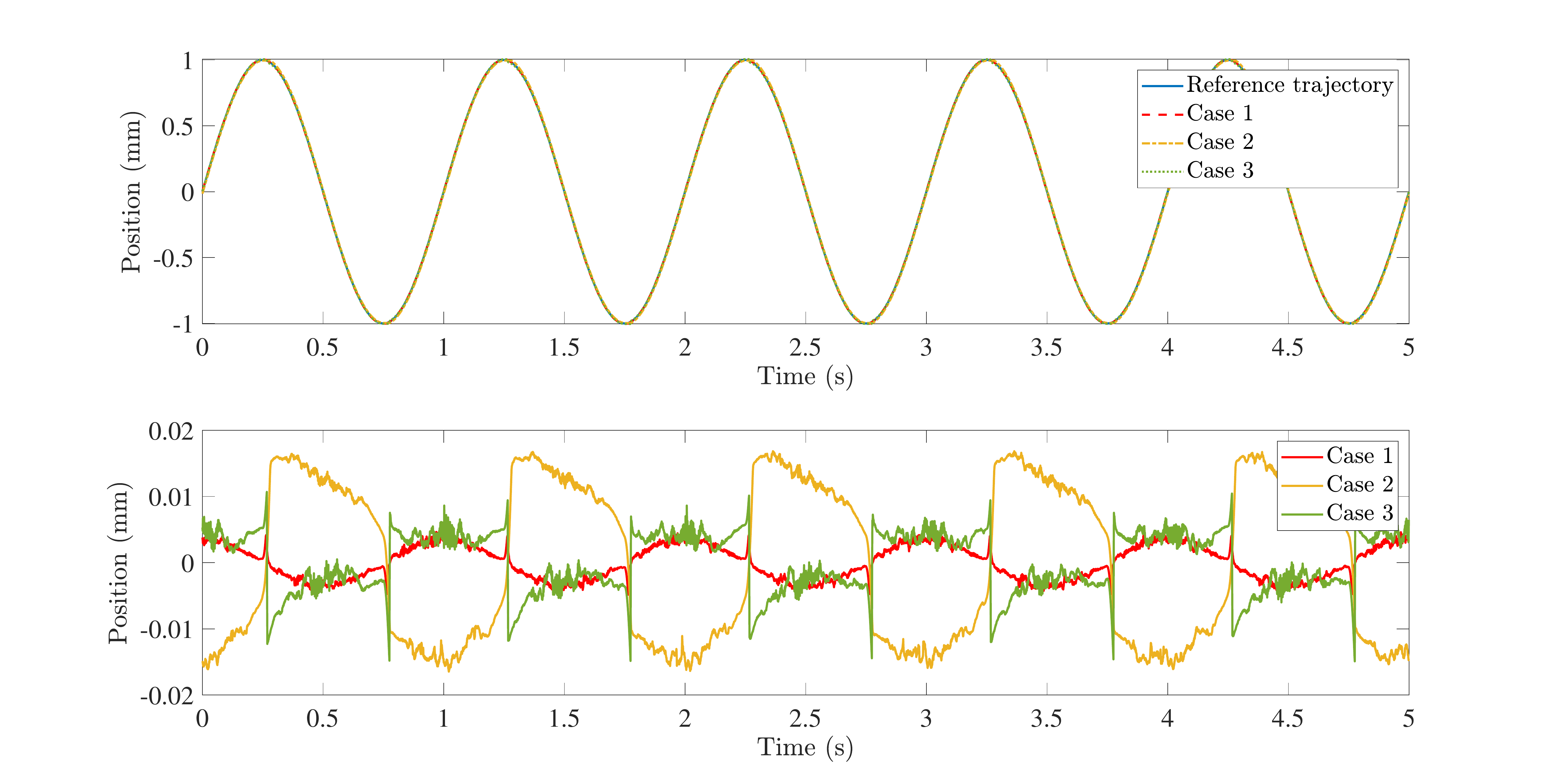}
	\caption{Trajectory tracking results at 1 Hz sinusoidal wave using different controllers (\textbf{Top.} Position tracking performance; \textbf{Bottom.} Position tracking errors).\label{x_sin_1Hz}}\vspace{-0.5cm}
\end{figure}
\begin{figure}[!b]
	\centering\vspace{-0.5cm}
	\includegraphics[scale=0.18,trim=40 10 50 0,clip]{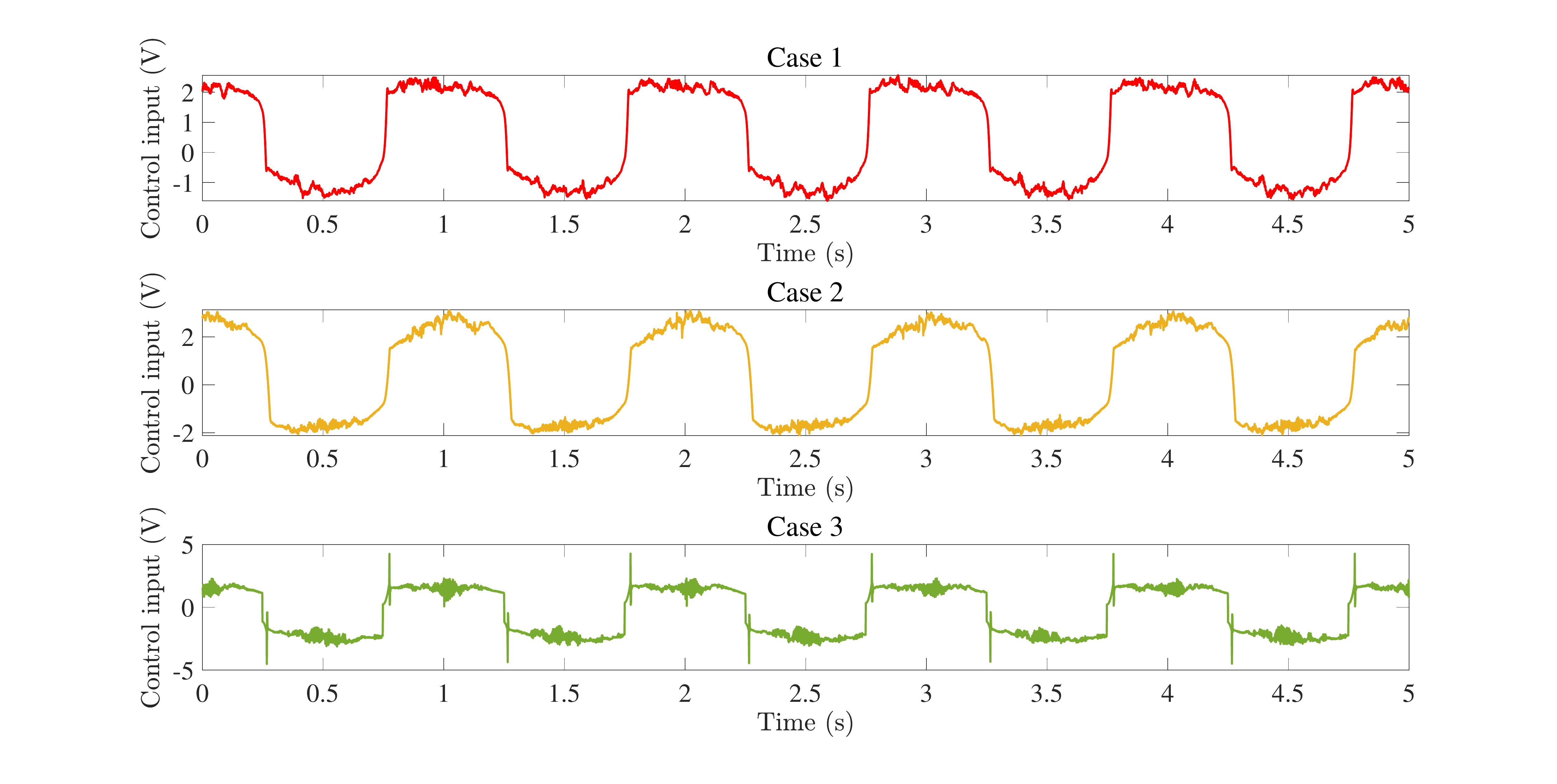}
	\caption{Comparison of control inputs at 1 Hz sinusoidal wave among different controllers.\label{out_sin_1Hz}}
\end{figure}

\cref{x_sin_5Hz,x_sin_10Hz} show the trajectory tracking results and errors at 5 Hz and 10 Hz sinusoidal waves using different controllers, respectively. \cref{out_sin_5Hz,out_sin_10Hz} show the control inputs at 5 Hz and 10 Hz sinusoidal waves using different controllers, respectively. At high frequency, heat disturbance becomes another significant interference that affects the system performance. As the increase in frequency, the influence of heat disturbance will be greater. As can be seen from these figures, at either 5 Hz or 10 Hz, the tracking error of the designed controller is the smallest among the three controllers. Furthermore, comparing with SMC with DOB, the chattering of control inputs of the designed controller is smaller, which means that the designed controller is more friendly to the hardware system. 

\begin{figure}[!t]
	\centering
	\includegraphics[scale=0.18,trim=40 10 50 0,clip]{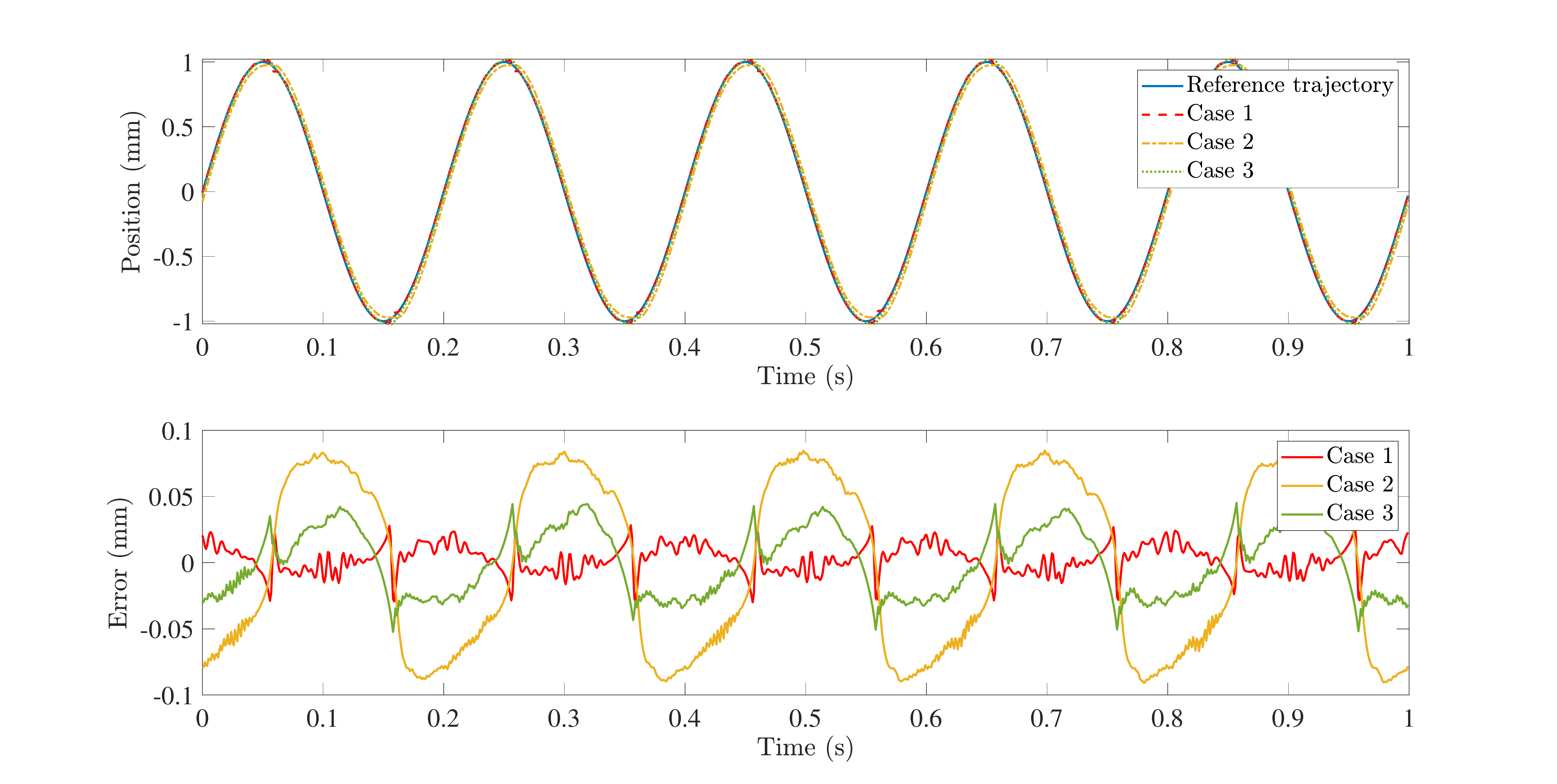}
	\caption{Trajectory tracking results at 5 Hz sinusoidal wave using different controllers (\textbf{Top.} Position tracking performance; \textbf{Bottom.} Position tracking errors).\label{x_sin_5Hz}}\vspace{-0.5cm}
\end{figure}
\begin{figure}[!t]
	\centering
	\includegraphics[scale=0.18,trim=40 10 50 0,clip]{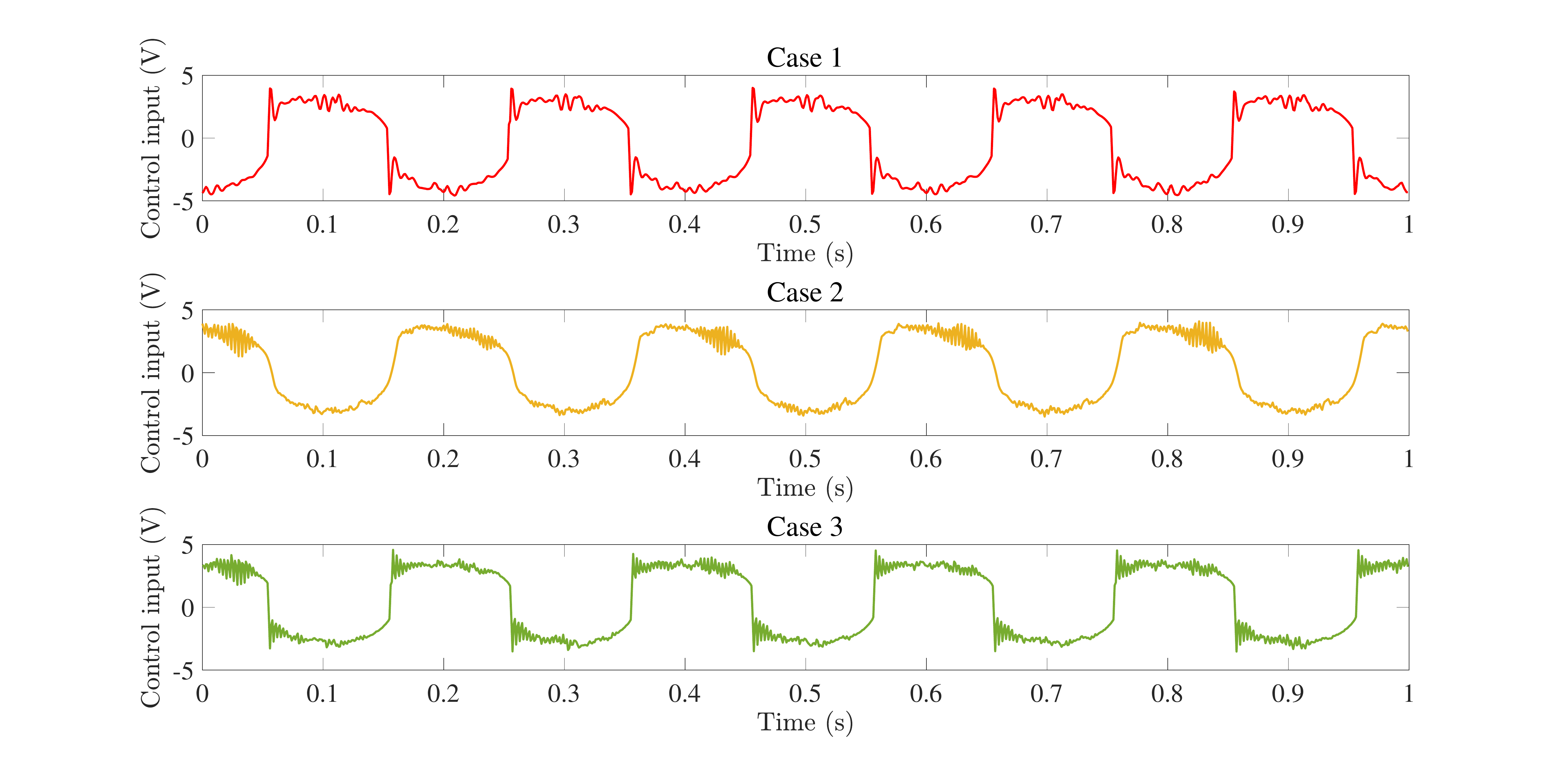}
	\caption{Comparison of control inputs at 5 Hz sinusoidal wave among different controllers.\label{out_sin_5Hz}}\vspace{-0.5cm}
\end{figure}

\begin{figure}[!t]
	\centering
	\includegraphics[scale=0.18,trim=40 10 50 0,clip]{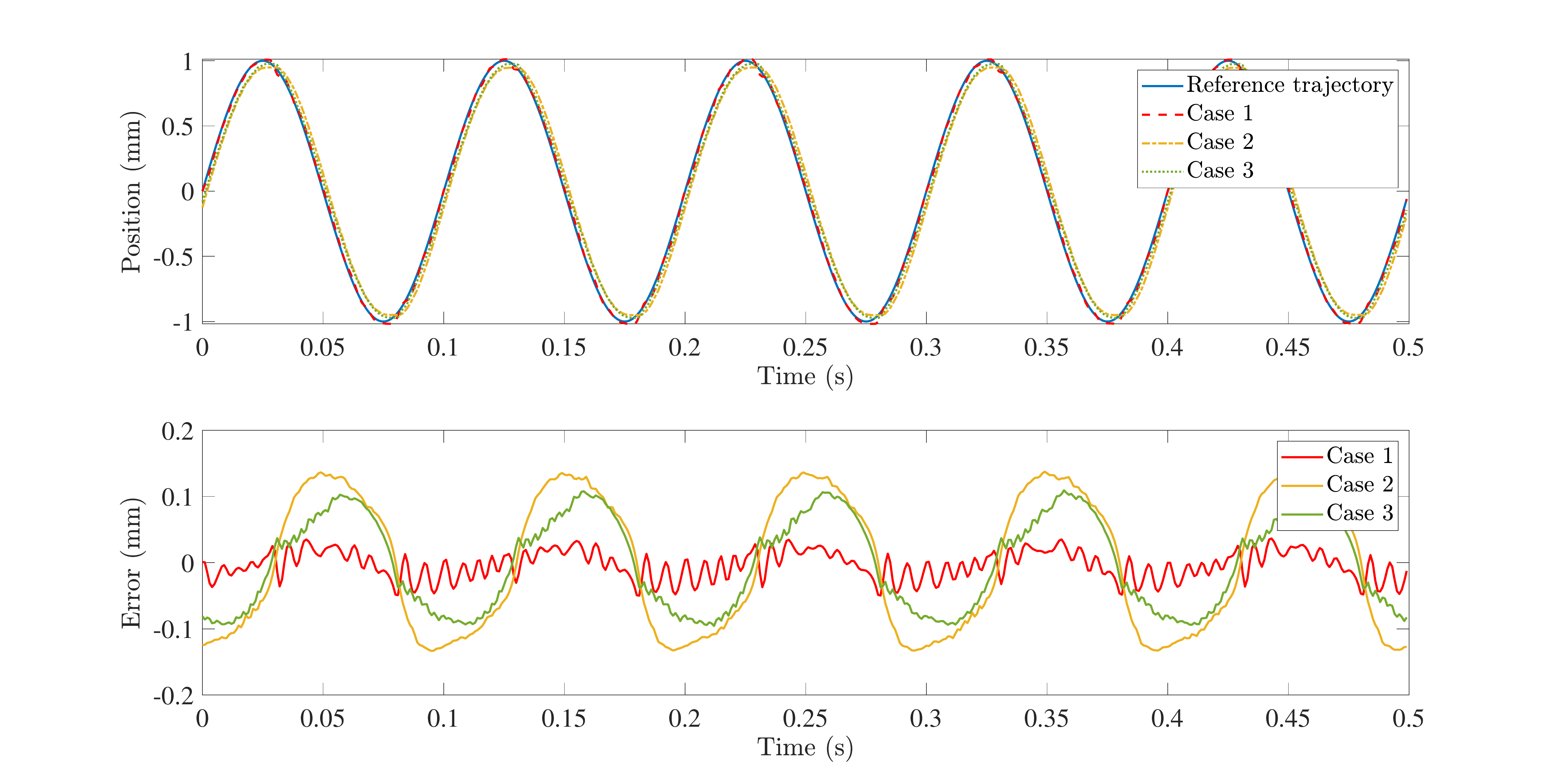}
	\caption{Trajectory tracking results at 10 Hz sinusoidal wave using different controllers (\textbf{Top.} Position tracking performance; \textbf{Bottom.} Position tracking errors).\label{x_sin_10Hz}}\vspace{-0.5cm}
\end{figure}
\begin{figure}[!t]
	\centering
	\includegraphics[scale=0.18,trim=40 10 50 0,clip]{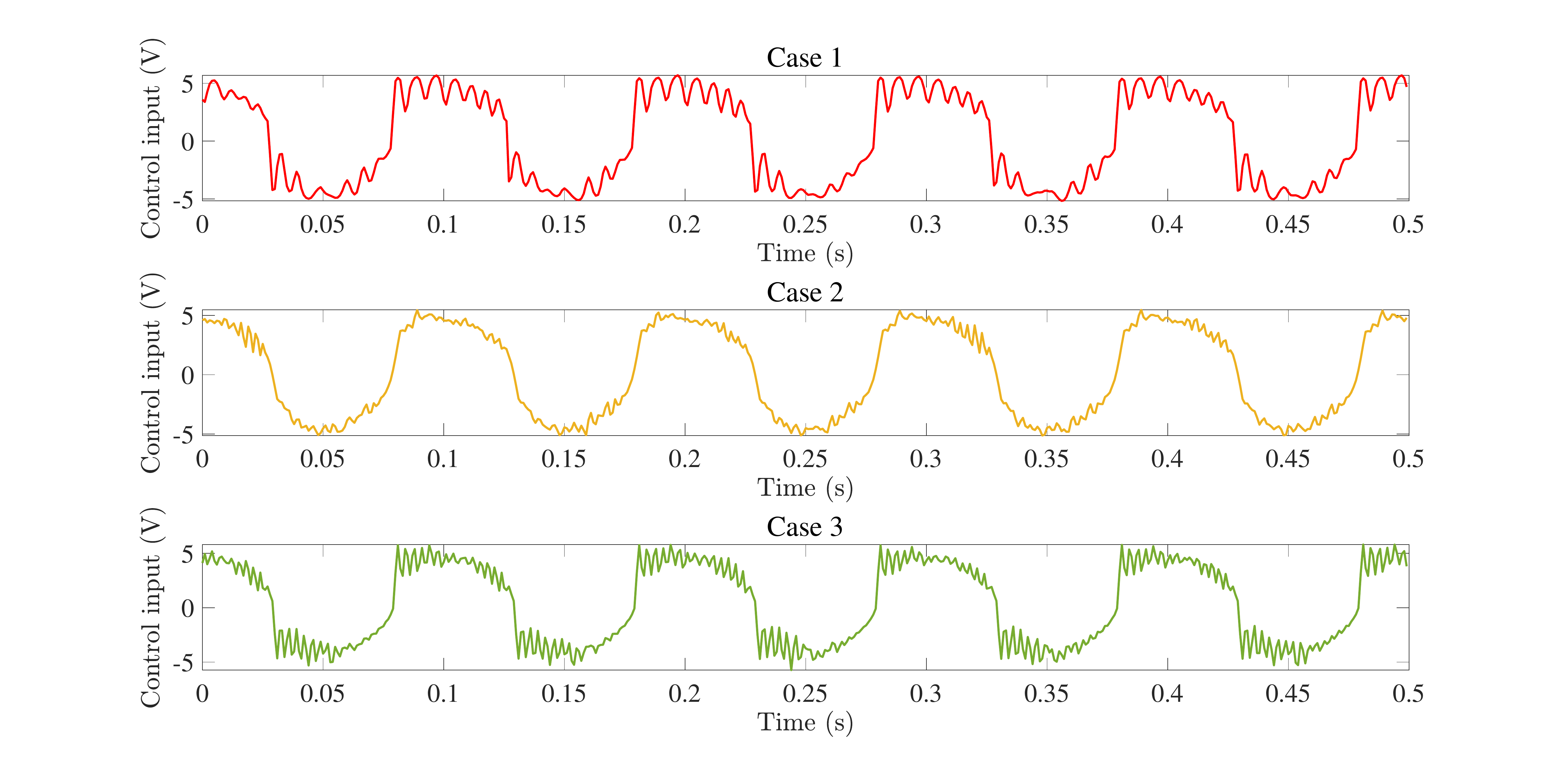}
	\caption{Comparison of control inputs at 10 Hz sinusoidal wave among different controllers.\label{out_sin_10Hz}}\vspace{-0.5cm}
\end{figure}

\cref{error1_table} exhibits the comparisons of the maximum absolute errors (MAEs) and root-mean-square errors (RMSEs) among the three controllers, where Case 4 is the results of our previous work reported in \cite{lau2019adaptive} (a controller on basis of an adaptive sliding mode enhanced disturbance observer  for the same experimental platform). In terms of the RMSE, the designed controller can achieve smaller tracking errors which are reduced by 21.88\% at 1 Hz and 21.07\% at 10 Hz. By comparing all cases in \cref{error1_table}, it is easy to conclude that the designed controller achieves better tracking performance than other cases in tracking continuous waves.

\begin{table}[!t]
	\centering
	\caption{Tracking errors of different cases for sinusoidal waves with different frequencies.\label{error1_table}}
	\newsavebox{\firsttablebox}
	\begin{lrbox}{\firsttablebox}
		\begin{tabular}{lcccc}
			\hline
			\hline
			\tabincell{l}{Cases }& \tabincell{l}{MAE and RMSE} & 1Hz& 5 Hz & 10Hz  \\
			\hline
			\multirow{2}*{\tabincell{l}{Case 1}}   &  MAE (mm)    &   0.0046       &0.0110  & 0.0365 \\
			&RMSE (mm)      &   0.0025     &0.0028&0.0251 \\
			\hline
			\multirow{2}*{\tabincell{l}{Case 2}}  &  MAE   (mm)       &  0.0168  &0.0847&0.1375\\
			&RMSE  (mm)      &   0.0121    &0.0647 &0.1010\\
			\hline
			\multirow{2}*{\tabincell{l}{Case 3}}    &  MAE  (mm)       &   0.0107    &0.0452 &0.1093\\
			&RMSE  (mm)      &    0.0046      &0.0252 &0.0695\\
			\hline
			\multirow{2}*{\tabincell{l}{Case 4}}  &  MAE   (mm)       &  0.0053   &0.0318&0.0961 \\
			&RMSE  (mm)      &   0.0032      &0.0187&0.0318 \\
			\hline
			\hline
		\end{tabular}
	\end{lrbox}
	\scalebox{1}{\usebox{\firsttablebox}}
\end{table}

\subsubsection{Experimental results of triangular tracking}
To further examine the tracking performance of the designed controller for discontinuous waveform, the triangular wave is employed in this experiment. Here, $\ddot{q}_r$ at the changing points is set to be zero. 

\begin{figure}[!b]
	\centering\vspace{-0.5cm}
	\includegraphics[scale=0.18,trim=40 10 50 0,clip]{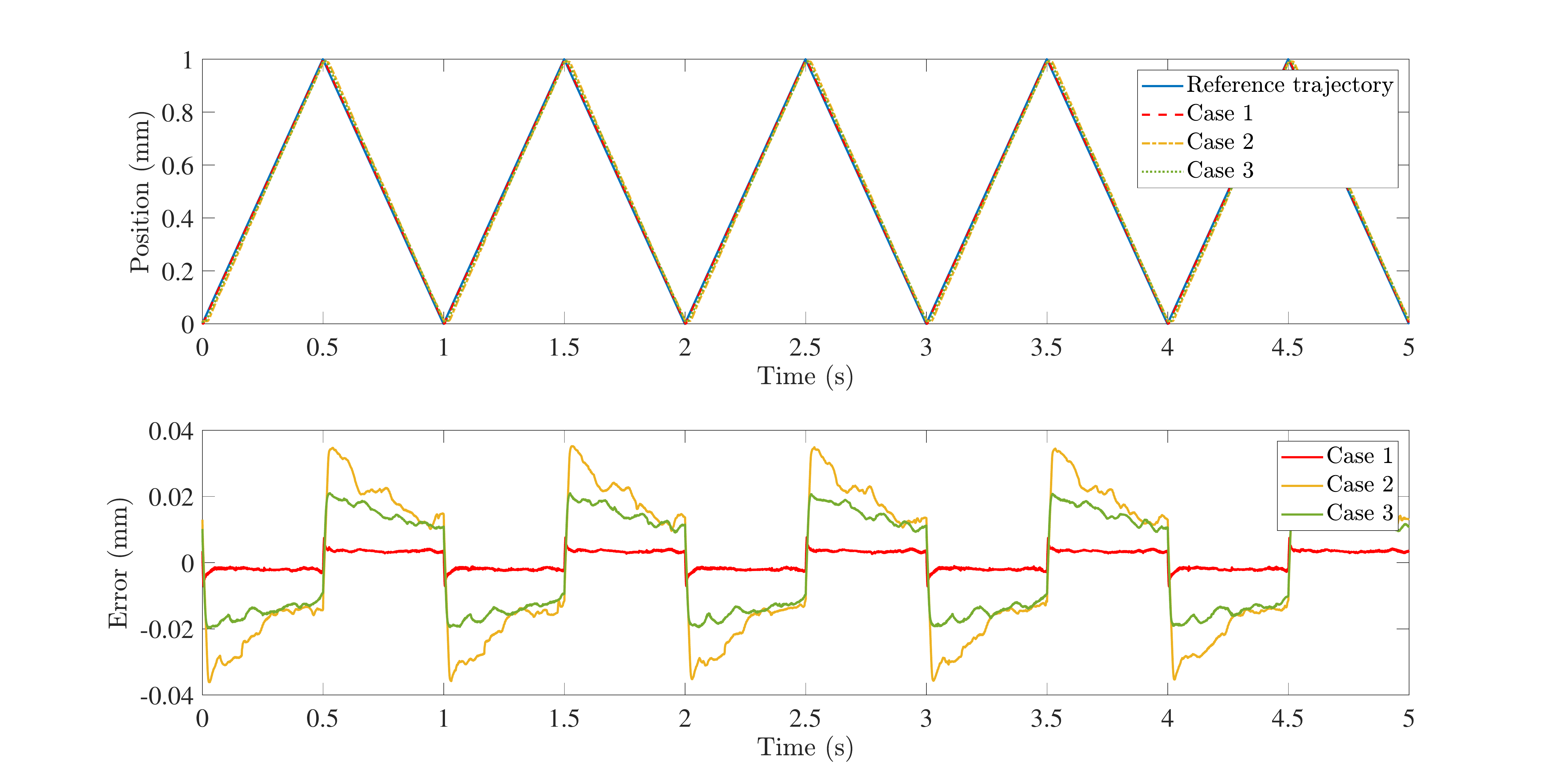}
	\caption{Trajectory tracking results at triangular wave using different controllers (\textbf{Top.} Position tracking performance; \textbf{Bottom.} Position tracking errors).\label{x_t_1Hz}}
\end{figure}
\begin{figure}[!b]
	\centering\vspace{-0.5cm}
	\includegraphics[scale=0.18,trim=40 10 50 0,clip]{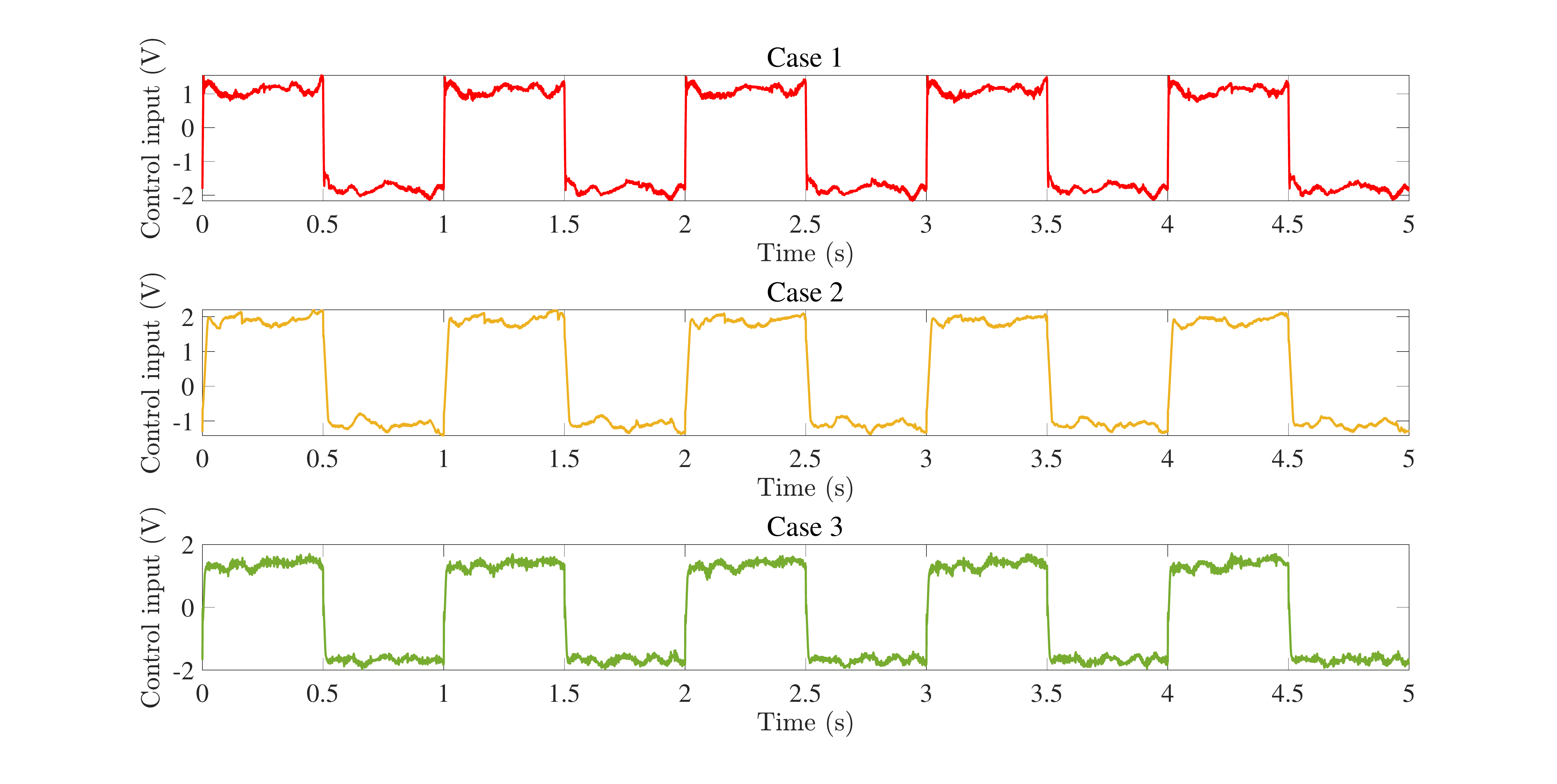}
	\caption{Comparison of control inputs at triangular wave among different controllers.\label{out_t_1Hz}}
\end{figure}

\cref{x_t_1Hz} shows the trajectory tracking results and errors at the triangular wave using different controllers. \cref{out_t_1Hz}  shows the control inputs at the triangular wave using different controllers.  It can be observed from these two figures that at the changing point, the designed controller achieves the fastest error convergence rate, and the peak value of its control input is the smallest.

\cref{error2_table} shows the comparisons of the MAEs and RMSEs among all these cases, where Case 5 is the results of our previous work presented in \cite{Yik2019motion} (a controller consists of a robust neural network and an extended state observer for the same experimental platform). Compared with the previous work, the MAE and RMSE are reduced by 57.79\% and 70.19\%, respectively. These results show that the designed controller can achieve higher control accuracy by consuming less energy to track discontinuous waves.
\begin{table}[!t]
	\centering
	\caption{Tracking errors of different cases at triangular wave.\label{error2_table}}
	\newsavebox{\secondtablebox}
	\begin{lrbox}{\secondtablebox}
		\begin{tabular}{lcccc}
			\hline
			\hline
			\tabincell{c}{MAE and \\ RMSE}& \tabincell{l}{Case 1} & \tabincell{l}{Case 2} & \tabincell{l}{Case 3} & \tabincell{l}{Case 5}\\
			\hline
			\tabincell{l}{MAE (mm)}& 0.0084& 0.0352 &0.0210  & 0.0199\\
			\tabincell{l}{RMSE (mm)} & 0.0031& 0.0216 & 0.0151 & 0.0104\\
			\hline
			\hline
		\end{tabular}
	\end{lrbox}
	\scalebox{1}{\usebox{\secondtablebox}}\vspace{-0.5cm}
\end{table}

\section{Conclusions}
This paper presents and details the design of an AFOSMC and a neural network compensator to ensure that a USM can run smoothly and accurately. In our work, the fraction-order calculus is introduced into the controller design to lower the chattering and improve the tracking accuracy of SMC. Here, the short memory principle is employed to solve the numerical solution of the fractional derivative during the implementation of the designed controller. Furthermore, the compensator is designed to compensate for the residual errors caused by the short memory principle. Finally, experiments verify the effectiveness of the designed controller, and the results show clearly that the designed controller can lower the chattering and achieve much better tracking performance.

\bibliographystyle{IEEEtran}
\bibliography{mybibfile}

\end{document}